\renewcommand{\theenumi}{\roman{enumi}}
\setlist[enumerate]{label={\rm(\theenumi)}}
\newcommand{\comment}[1]{}
\numberwithin{equation}{section}
\theoremstyle{plain}
\newtheorem{theorem}{Theorem}[section]
\newtheorem{proposition}[theorem]{Proposition}
\newtheorem{lemma}[theorem]{Lemma}
\theoremstyle{definition}
\newtheorem{definition}[theorem]{Definition}
\newtheorem{remark}[theorem]{Remark}
\newtheoremstyle{key}{3pt}{3pt}{}{}{\bfseries}{:}{.5em}{}
\theoremstyle{key}
\newtheorem*{JEL}{JEL subject classification}
\newtheorem*{MSC}{MSC2010 subject classification}
\newtheorem*{keywords}{Keywords}
\DeclareMathOperator*{\esssup}{ess\,sup}
\renewcommand{\mid}{\,\vert\,}
\newcommand{\bigv}{\!\bigm\vert\!}
\newcommand{\biggv}{\!\biggm\vert\!}
\providecommand{\abs}[1]{\left\lvert#1\right\rvert}
\newcommand{\nbd}[1]{$#1$\nobreakdash-\hspace{0pt}}
\newcommand{\indi}[1]{\mathbf{1}_{#1}}
\newcommand{\N}{\mathbb{N}}
\newcommand{\R}{\mathbb{R}}
\newcommand{\filt}[1]{\mathbf{#1}} % filtration
\newcommand{\B}{{\mathscr B}}
\newcommand{\F}{{\mathscr F}}
\newcommand{\T}{{\mathscr T}}
\newcommand{\cA}{{\cal A}}
\newcommand{\cC}{{\cal C}}
\newcommand{\cP}{{\cal P}}
\title{Quick or Persistent? Strategic Investment Demanding Versatility}
\author{
Jan-Henrik Steg\thanks{%
Center for Mathematical Economics, Bielefeld University, Postfach 10 01 31, 33501 Bielefeld, Germany. \texttt{jsteg@uni-bielefeld.de}} 
\and 
Jacco Thijssen\thanks{%
The York Management School, University of York, Freboys Lane, Heslington, York YO10 5GD, UK. \texttt{jacco.thijssen@york.ac.uk}}
}
\date{ }
\begin{document}
\maketitle

\begin{abstract}
In this paper we analyse a dynamic model of investment under uncertainty in a duopoly, in which each firm has an option to switch from the present market to a new market. We construct a subgame perfect equilibrium in mixed strategies and show that both preemption and attrition can occur along typical equilibrium paths. In order to determine the attrition region a two-dimensional constrained optimal stopping problem needs to be solved, for which we characterize the non-trivial stopping boundary in the state space. We explicitly determine Markovian equilibrium stopping rates in the attrition region and show that there is always a positive probability of eventual preemption, contrasting the deterministic version of the model. A simulation-based numerical example illustrates the model and shows the relative likelihoods of investment taking place in attrition and preemption regions.

\begin{keywords}
Stochastic timing games, preemption, war of attrition, real options, {M}arkov perfect equilibrium, two-dimensional optimal stopping.
\end{keywords}
\begin{JEL}
C61, C73, D21, D43, L13
\end{JEL}
\begin{MSC}
60G40, 91A25, 91A55, 91A60
\end{MSC}
\end{abstract}
%\newpage

\section{Introduction}\label{sec:intro}

Consider the following innocuous sounding investment problem involving two coffee shops operating a similar franchise in a town. The profitability of each firm is influenced by some stochastic process, representing uncertainty over the evolution of demand, costs, etc. Each firm has an option to engage in product differentiation by switching to a rival franchise. When should a firm switch franchise, if at all? This economic situation is a very basic example of a \emph{timing game}.\footnote{%
This is a game because the actions of one firm have an influence on the other firm, in that if one coffee shop switches the other firm becomes a (local) monopolist in the original franchise.
} 

In the literature on timing games a distinction is typically made between wars of attrition and preemption, which result from situations with a second-mover or first-mover advantage, respectively. In a game with a first-mover advantage there is a pressure on players to act sooner than is optimal because of the fear of being preempted by the other player. In the classical analysis by \citet{FT85} it is shown that the first-mover advantage of becoming the \emph{leader} is dissipated by preemption, inducing \emph{rent equalization} in that both firms obtain the second-mover, or \emph{follower} payoff in expectation. In games with a second-mover advantage the situation is radically different, because in such games each player wants the other player to move first. This can lead players to take on some costs of waiting in equilibrium, as shown e.g.\ by \citet{HWW88}.

Timing games have been used for instance in the industrial organization literature to explain a plethora of phenomena dealing with such issues as technology adoption \citep{FT85}, firms' exit decisions in a duopoly \citep{FT86}, and patent races (\citealp{Rein82}). Relatively early on in the development of the real options theory of investment under uncertainty it was recognized that real options, unlike their financial counterparts, are usually non-exclusive and that, thus, a game theoretic approach is needed. Many contributions in this literature have focussed on analysing markets with a first-mover advantage.\footnote{%
A non-exhaustive list of contributions is \citet{Smets91}, \citet{HK99}, \citet{THK12}, \citet{Weeds02}, \citet{BLMM04}, \citet{PK06}. Extensive surveys can be found in \citet{CT11} and \citet{AP14}.
}
The literature on game theoretic real options models with second-mover advantages is less well-developed, with \citet{Hoppe00} and \citet{Mur04} being notable exceptions.

In this paper we want to think about duopolies in which \emph{both} attrition and preemption can occur, depending on the evolution of the market environment, and to construct a subgame perfect equilibrium for such games. It is in such a model that the true power of stochastic versus deterministic models becomes clear. Typically, in models with either a first- or second-mover advantage, regardless of whether the model is deterministic or stochastic, it is clear \emph{a priori} what will happen in equilibrium: preemption in case of a first-mover and attrition in the case of a second-mover advantage. It is just the timing that is different: on average investment takes place later in a stochastic model. However, in many real-world situations it not clear whether a market will develop into one with a first- or second-mover advantage. In such cases one has to resort to a stochastic model.

Returning to the coffee franchise example, a switch by either firm changes the firms' profitabilities in two ways: (i) there is a change in market structure and (ii) the stochastic shocks under the new franchise follow a different process. The economic conflict here is the influence that each firm has over its competitor by switching to the other franchise. If only one firm switches, then each firm will operate as a monopolist in its respective market. This is good for both firms, but potentially best for the firm that does not switch, because it does not incur the sunk costs. This could lead to a war of attrition, i.e.\ a situation where it is profitable for each firm to switch, but each prefers the other firm to do so. On the other hand, if the profitability in the new market is higher than in the current market but not high enough to make a duopoly in the new market profitable, then a situation of preemption may occur. In such cases firms may try to switch before it is optimal to do so.

The novelty of the model we develop is that it can deal with attrition and preemption simultaneously. We do this by using a two-dimensional state space, roughly speaking corresponding to the profitability in each of the two markets. In equilibrium this state space will effectively be split into three regions: a \emph{preemption region} where both players try to preempt each other, an \emph{attrition region} where each firm prefers its competitor to invest, and a \emph{continuation region} where neither firm acts. Along a sample path the following may happen. If the state hits the preemption region, then at least one firm invests instantly. If the state moves through the attrition region, each firm invests at a certain rate (the \emph{attrition rate}) until either the preemption region is hit, or the attrition region is left towards the continuation region. Inside the latter nothing happens until either the preemption region is hit directly or the attrition region (again).

The richness of this model does present some technical challenges. Firstly, in regimes with a second-mover advantage, payoffs of pure-strategy equilibria are generally asymmetric and each firm prefers to become follower. Determining the roles is then a (commitment) problem left open. By considering mixed strategies we can make the firms indifferent. With uncertainty, however, the attrition region where the mixing occurs can be entered and left repeatedly at random. We nevertheless obtain an equilibrium with Markovian stopping probabilities per small time interval (the so-called \emph{attrition rate}), while deterministic models typically contain some simplifying smoothness assumption on the payoff functions. In the latter case the evolution of the game is not only perfectly predictable, but also ``slow''. Thus, when the attrition rate becomes unbounded near the preemption region as the second-mover advantage vanishes, stopping occurs a.s.\ before reaching the preemption region. With our faster stochastic dynamics, in contrast, there is always a positive probability of preemption to actually occur.

Another important issue relates to the value functions of the leader and follower roles. Along a sample path, the game ends for sure as soon as the preemption region is entered. This result has been known since the \citet{FT85} analysis of a deterministic preemption game. In our paper subgame perfection leads to similar equilibrium behavior. Consequently, firms know \emph{ex ante} that the game ``ends'' as soon as the preemption region is first hit. Working backwards, this means that firms must take this into account when they choose their strategy in the attrition region. In particular, as we will see, this has an effect on the attrition region itself. Determining this region will require solving a genuinely two-dimensional constrained optimal stopping problem. There are no known general techniques to solve such problems (ours \emph{cannot} be transformed into a one-dimensional problem like some models considered in the optimal stopping literature), but we are able to show that, quite remarkably, our model has enough structure to fully characterize the attrition and preemption regions.

Finally, we propose a simulation-based approach to numerically analyse the model. This simulation shows, for a particular starting point in the continuation region, that many sample paths actually reach the preemption region via the attrition region. This shows that our results on stochastic versus deterministic models are economically important.

The paper is organized as follows. In Section~\ref{sec:model} we present a stochastic dynamic model of a market in which both first- and second-mover advantages can occur. In Section~\ref{sec:informal} we describe the paper's results in an informal way. The timing game is formally defined, analysed and discussed in detail in Section~\ref{sec:formal}. In particular we construct a subgame perfect equilibrium in mixed (Markovian) strategies in Section~\ref{subsec:eql}. An important part of this construction consists of solving a constrained optimal stopping problem in Section~\ref{subsec:stop}. Section~\ref{sec:example} present a numerical example that we use to discuss the economic content of our results and in Section~\ref{sec:dtm} we contrast our stochastic analysis with its deterministic version. Finally, Section~\ref{sec:concl} provides some concluding remarks.

\section{A Model with First- and Second-Mover Advantages: Product Differentiation in a Duopoly}\label{sec:model}

Consider two firms that are producing a homogeneous good. For instance, the firms could be two coffee shops operating a similar franchise $A$ in a town. The profitability of each firm is influenced by some stochastic process. Each firm has an option to engage in product differentiation and switch to a different franchise $B$. This switch involves some sunk costs (for example due to refurbishment). To make matters more concrete, suppose that when firms operate the same franchise Bertrand competition implies that both firms make no profits. If, however, firms operate a different franchise both can be seen as monopolists in their respective markets and they will earn some profit streams. The monopoly profits in franchises $A$ and $B$ are given by the processes $X=(X_t)_{t\geq 0}$ and $Y=(Y_t)_{t\geq 0}$, respectively, while there are no profits in duopoly in either franchise. There is a sunk cost of switching $I>0$. For completeness we consider some additional running costs before and after switching denoted by $c_0$, $c_A$ and $c_B$, respectively, but these have no qualitative impact and may also be ignored. Depending on the net present value of the two monopoly profits $X$ and $Y$ and the total costs of switching, it may at any time be more profitable to switch to the new market, to become monopolist in the current market, or to stay in duopoly.

To obtain some explicit results we assume that $X$ and $Y$ are geometric Brownian motions, specifically as the unique solutions to the system of stochastic differential equations
\begin{equation}\label{SDE}
\frac{dX}{X}=\mu_Xdt+\sigma_XdB^X\quad\text{and}\quad\frac{dY}{Y}=\mu_Ydt+\sigma_YdB^Y
\end{equation}
with given initial values $(X_0,Y_0)=(x,y)\in\R_+^2$. $B^X$ and $B^Y$ are Brownian motions that may be correlated with coefficient $\rho$, and $\mu_X$, $\mu_Y$, $\sigma_X$ and $\sigma_Y$ are some constants representing the drift and volatility of the growth rates of our profit processes $X$ and $Y$, respectively. We assume $\sigma_X^2,\sigma_Y^2>0$ and $\abs{\rho}<1$ to exclude potential degeneracies that require separate treatment (the case $\sigma_X=\sigma_Y=0$ is considered in Section \ref{sec:dtm}). We also assume that profits are discounted at a common and constant rate $r>\max(0,\mu_X,\mu_Y)$ to ensure finite values of the following payoff processes and the subsequent stopping problems.\footnote{\label{fn:XclassD}%
Then $(e^{-rt}X_t)$ is bounded by an integrable random variable. Indeed, for $\sigma_X>0$ we have $\sup_te^{-rt}X_t=X_0e^{\sigma_XZ}$ with $Z=\sup_tB^X_t-t(r-\mu_X+\sigma_X^2/2)/\sigma_X$, which is exponentially distributed with rate $2(r-\mu_X)/\sigma_X+\sigma_X$ (see, e.g., \cite{RevuzYor}, Exercise (3.12) $4^\circ$). Thus, $E[\sup_te^{-rt}X_t]=X_0(1+\sigma_X^2/2(r-\mu_X))\in\R_+$, implying that $(e^{-rt}X_t)$ is of class (D); analogously for $\sigma_X<0$ and $Y$.
}

Following the previous discussion, the basic payoffs can now be formulated by processes $L$, $F$ and $M$, which represent the expected payoffs at the time of the (first and only) switch to \emph{(i)} a firm that switches solely and becomes the \emph{leader}, \emph{(ii)} a firm that remains and becomes the \emph{follower} and \emph{(iii)} a firm that switches simultaneously with the other, respectively.

At any time $t\geq 0$ these processes take the values
\begin{flalign}\label{L(y)}
&& L_t:={}&-\int_0^te^{-rs}c_0\,ds+E\biggl[\int_t^\infty e^{-rs}(Y_s-c_B)\,ds\biggv\F_t\biggr]-e^{-rt}I &&\nonumber\\
&& ={}&-\frac{c_0}{r}+e^{-rt}\biggl(\frac{Y_t}{r-\mu_Y}-\frac{c_B-c_0}{r}-I\biggr):=L(t,Y_t), &&\nonumber\\
&& F_t:={}&-\int_0^te^{-rs}c_0\,ds+E\biggl[\int_t^\infty e^{-rs}(X_s-c_A)\,ds\biggv\F_t\biggr] &&\nonumber\\
&& ={}&-\frac{c_0}{r}+e^{-rt}\biggl(\frac{X_t}{r-\mu_X}-\frac{c_A-c_0}{r}\biggr):=F(t,X_t) &&
\end{flalign}
and
\begin{flalign*}
&& M_t:={}&-\int_0^\infty e^{-rs}c_0\,ds-e^{-rt}I\hphantom{+E\biggl[\int_t^\infty e^{-rs}(Y_s-c_B)\,ds\biggv\F_t\biggr]} &&\\
&& ={}&-\frac{c_0}{r}-e^{-rt}I. &&
\end{flalign*}
Note that these processes are continuous thanks to their respective representation by a continuous function of the underlying continuous profit processes $X$ and $Y$. By those  Markovian representations we may also evaluate the payoff processes at any stopping time at which the switch might occur to obtain the correct payoffs.\footnote{%\label{fn:Lcont}%
If the switch occurs at a stopping time $\tau$, one has to take conditional expectations at $\tau$ to determine the appropriate payoff $L_\tau$ for example, which in general need not be consistent at all with a family of conditional expectations indexed by deterministic times $t$ and a pointwise construction via $t=\tau(\omega)$.
}
All payoff processes are further bounded by integrable random variables, cf.\ footnote \ref{fn:XclassD}, and they converge to $-c_0/r=:L_\infty$ as $t\to\infty$.

We assume that it is optimal for only one firm to switch, specifically $(c_0-c_A)/r+I\geq 0$ to ensure $F\geq M$, and that the relative capitalized total cost of a single firm that switches is nonnegative, i.e., $(c_B-c_A)/r+I\geq 0$.

The firms will try to preempt each other in switching to the franchise $B$ when $L>F$, which is the case iff
\begin{equation*}
(X,Y)\in\cP:=\Bigl\{(x,y)\in\R_+^2\Bigm\vert y>(r-\mu_Y)\Bigl(\frac{x}{r-\mu_X}+\frac{c_B-c_A}{r}+I\Bigr)\Bigr\}.
\end{equation*}
We accordingly call $\cP$ the \emph{preemption region} of the state space $\R_+^2$ of our process $(X,Y)$. When $F>L$, each firm prefers to be the one who stays if there is a switch, which potentially induces a war of attrition. We will model this strategic conflict as a timing game.

\section{An Informal Preview of the Results}\label{sec:informal}

At this stage it is instructive to analyse the model informally and anticipate the equilibrium construction that follows. To begin, in line with the literature on preemption games our game has to end as soon as the set $\cP$ is hit. However, to model this preemption outcome it is not enough to consider distribution functions over time as (mixed) strategies.\footnote{%
See \cite{HendricksWilson92} for a proof of equilibrium nonexistence.
}
As there is no ``next period'' in continuous time, one has to enable the firms both to try to invest immediately but avoiding simultaneous investment at least partially, in particular on the boundary of $\cP$, where they are still indifferent. We follow the endogenous approach of \cite{FT85}, augmenting strategies by ``intensities'' $\alpha_i\in[0,1]$ that determine the outcome when both firms try to invest simultaneously like in an infinitely repeated grab-the-dollar game. A firm that grabs first invests. If firm $j$ grabs with stationary probability $\alpha_j>0$, firm $i$ can obtain the follower payoff $F$ by never grabbing, and $\alpha_jM+(1-\alpha_j)L$ by grabbing with probability 1. Hence the firms are just indifferent between both actions in equilibrium if
\begin{equation*}
\alpha_1=\alpha_2=\alpha=\frac{L-F}{L-M}\in(0,1],
\end{equation*}
implying the expected local payoffs $F$ (note that we have $F\geq M$). With symmetric intensities the probability that either firm becomes leader is then
\begin{equation*}
\alpha(1-\alpha)+(1-\alpha)^2\alpha(1-\alpha)+\cdots=\frac{1-\alpha}{2-\alpha}.
\end{equation*}
Taking limits, that probability becomes $\frac12$ on the boundary of $\cP$ where $\alpha$ vanishes (see Remark \ref{rem:outcome} in the Appendix on the limit outcome). In the interior of $\cP$, however, there is a positive probability $\alpha/(2-\alpha)$ of simultaneous investment representing the cost of preemption, in contrast to more ad hoc coordination devices.

Now consider states outside $\cP$. Suppose first that some firm tries to determine when it would be optimal to make the switch and obtain the leader value $L$, which directly depends only on $Y$. If the competitor could not preempt the firm, standard techniques would yield that there is a threshold $y^{\ast}$ that separates the \emph{continuation region}, where the firm remains inactive, and the \emph{stopping region} where the firm makes the switch. The optimal stopping time is thus the first time the process $Y$ enters the set $[y^{\ast},\infty)$ and the process $X$ plays no role in this problem at all.

The firm should realize, however, that as soon as $\cP$ is hit the game is over and it receives, in expectation, the follower value, which does depend on $X$. Thus the firm should actually solve the \emph{constrained} optimal stopping problem to become leader up to the first hitting time of $\cP$. Intuitively, the solution to this problem should again divide the (genuinely two-dimensional) state space into two sets: a continuation set $\cC$ and a stopping set $\cA$. The boundary between these regions turns out to be given by a mapping $x\mapsto b(x)$, see Figure~\ref{fig:regions}. $b(x)$ increases to $y^\ast$ for $x\to\infty$, since then the probability of reaching $\cP$ while waiting to get closer to the unconstrained threshold $y^\ast$ is small.

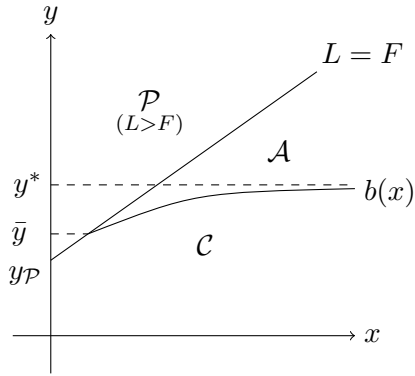
\begin{figure}[ht]
  \centering
  \begin{tikzpicture}[inner sep=0pt,minimum size=0pt,label distance=3pt]
    \draw[->] (-0.5,0) -- (4,0) node[label=right:$x$] {}; % x axis
    \draw[->] (0,-0.5) -- (0,4) node[label=above:$y$] {}; % y axis
    \draw[dashed] (0,2) -- (4,2) [] {};
    \draw[-] (0,1) -- (3.5,3.5) []{};
    \draw[-] (0.49,1.35) .. controls (1.9,1.9) and (1.9,1.9) .. (4,1.95) [] {};
    \draw[dashed] (0,1.35) -- (.49,1.35) [] {};

    %\fill[black] (1.5,1.85) circle (.04);

    \node at (0,2) [label=left:$y^*$] {};
    \node at (0,1.35) [label=left:${\bar y\ \,}$] {};
    \node at (0,.8) [label=left:${y_\cP}$] {};
    \node at (4.1,3.5) [label=above:${L=F}$] {};
    \node at (1.3,2.5) [label=above:$\underset{(L>F)}{\cP}$] {};
    \node at (3,2.2) [label=above:$\cA$] {};
    \node at (4,1.9) [label=right:$b(x)$] {};
    \node at (1.8,1.2) [label=right:$\cC$] {};
  \end{tikzpicture}
  \caption{Continuation, Attrition and Preemption regions.\protect\footnotemark}
  \label{fig:regions}
\end{figure}
\footnotetext{%
The definitions of $\bar{y}$ and $y_\cP$ will become clear in Section~\ref{subsec:stop}.
}%

For smaller values of $x$, however, there is a higher risk to get trapped in preemption if $\cP$ is hit with $Y<y^\ast$. Thus it may be better to secure the current leader value before $\cP$ is hit at a possibly even lower value. We formally show that it is optimal to stop strictly before $\cP\cup\R_+\times[y^\ast,\infty)$ is hit in the constrained stopping problem, i.e., the stopping boundary $b(x)$ lies below that region (in particular when preemption is ``near''). In the interior of the stopping set $\cA$ we now have a situation alike a war of attrition: waiting without a switch occuring is costly (the constrained leader value decreases in expectation), but becoming follower in this region would yield a higher payoff $F>L$. 

We characterize the attrition region $\cA$ below in terms of the stopping boundary $b(x)$ for the constrained leader's stopping problem and identify the cost of waiting in this region as the \emph{drift} of $L$. In equilibrium the firms are only willing to continue and bear the cost in $\cA$ if there is a chance that the opponent drops out, yielding the follower's payoff. We will propose symmetric Markovian equilibrium hazard rates of investing~-- called \emph{attrition rates}~-- that make both firms exactly indifferent to continue in $\cA$. If we denote those rates here by $\lambda_t$ to give a heuristic argument, the probability that a firm invests in a small time interval $[t,t+dt]$ is $\lambda_t\,dt$. Since each firm is supposed to be indifferent to wait, the equilibrium value (i.e., expected payoff) $V_t$ should satisfy
\begin{align*}
V_t&=F_t\lambda_t\,dt+(1-\lambda_t\,dt)E[V_{t+dt}\mid\F_t]=F_t\lambda_t\,dt+(1-\lambda_t\,dt)E[V_t+dV_t\mid\F_t] \\
\Leftrightarrow\quad\lambda_t\,dt&=\frac{-E[dV_t\mid\F_t]}{F_t-V_t-E[dV_t\mid\F_t]}.
\end{align*}
On the other hand we should have $V$=$L$ in $\cA$ by indifference and since the probability of simultaneous investment is zero if the other firm invests at a rate. Hence $E[dV_t\mid\F_t]$ is the (here negative) drift of $L_t$ of order $o(dt)$, implying
\begin{align*}
\lambda_t\,dt=\frac{-E[dL_t\mid\F_t]}{F_t-L_t}.
\end{align*}

We will model strategies more generally as distribution functions $G_i(t)$ over time, such that the hazard rates are actually $\lambda_t\,dt=dG_i(t)/(1-G_i(t))$. Note that in contrast to deterministic models that develop linearly (with time $t$ the state variable), we have to account for the possibility that the state enters and leaves the attrition region frequently and randomly, which makes indifference much more complex to verify. A distinctive feature of the exogenous uncertainty is that there is always a positive probability of reaching the preemption region with no firm having invested before, although the hazard rates $\lambda_t$ grow unboundedly as $F-L\to 0$ near $\cP$; the latter does enforce investment in the deterministic analogue.

\section{Formal Analysis of Subgame Perfect Equilibria}\label{sec:formal}

In this section we present and discuss in more detail the formal results that have been introduced informally in Section \ref{sec:informal}. We start by formalizing the timing game (in particular strategies and equilibrium concept) in Section \ref{subsec:game}, followed by establishing equilibria for subgames starting in the preemption region $\cP$ in Section \ref{subsec:preemeql}. The constrained optimal stopping problem that identifies the attrition region $\cA$ is analysed in Section \ref{subsec:stop}. In Section \ref{subsec:eql} we establish (Markovian) equilibrium strategies for arbitrary subgames.

\subsection{Timing Games: Strategies and Equilibrium}\label{subsec:game}

We use the framework of subgame perfect equilibrium for stochastic timing games with mixed strategies developed in \cite{RiedelSteg14}. There is a fixed probability space $\left(\Omega, \F, P\right)$ capturing uncertainty about the state of the world, in particular concerning future profits. Our firms can decide to switch to the new market $B$ in continuous time $t\in\R_+$. The decision when to switch can be based on some exogenous dynamic information represented by a filtration $\filt{F}=\bigl(\F_t\bigr)_{t\geq 0}$ satisfying the usual conditions (i.e., $\filt{F}$ is right-continuous and complete). It includes the current observations of the profit processes $X$ and $Y$, hence $L$, $F$ and $M$ are adapted to $\filt{F}$, too.

The feasible decision nodes in continuous time are all stopping times. Therefore any stopping time $\vartheta$ represents the beginning of a subgame, with the connotation that no firm has switched before. Let $\T$ denote the set of all stopping times w.r.t.\ the filtration $\filt{F}$.

We will specify complete plans of actions for all subgames, taking the form of (random) distribution functions over time. These have to be time-consistent, meaning that Bayes' rule has to hold wherever applicable. Additional strategy extensions serving as a coordination device are needed to model preemption appropriately in continuous time.

\begin{definition}\label{def:alpha}
An \emph{extended mixed strategy} for firm $i\in\{1,2\}$ in the subgame starting at $\vartheta\in\T$, also called \emph{\nbd{\vartheta}strategy}, is a pair of processes $\bigl(G^\vartheta_i,\alpha^\vartheta_i\bigr)$ taking values in $[0,1]$, respectively, with the following properties.
\begin{enumerate}
\item
$G^\vartheta_i$ is adapted. It is right-continuous and nondecreasing with $G^\vartheta_i(t)=0$ for all $t<\vartheta$, a.s.

\item
$\alpha^\vartheta_i$ is progressively measurable.\footnote{%
Formally, the mapping $\alpha^\vartheta_i\colon\Omega\times[0,t]\to\R$, $(\omega,s)\mapsto\alpha^\vartheta_i(\omega,s)$ must be $\F_t\otimes\B([0,t])$-measurable for any $t\in\R_+$. It is a stronger condition than adaptedness, but weaker than optionality, which we automatically have for $G^\vartheta_i$ by right-continuity. Progressive measurability implies that $\alpha^\vartheta_i(\tau)$ will be \nbd{\F_\tau}measurable for any $\tau\in\T$.
}
It is right-continuous where $\alpha^\vartheta_i<1$, a.s.\footnote{%
This means that with probability 1, $\alpha^\vartheta_i(\cdot)$ is right-continuous at all $t\in[0,\infty)$ for which $\alpha^\vartheta_i(t)<1$. Since we are here only interested in \emph{symmetric} games, we may demand the extensions $\alpha^\vartheta_i(\cdot)$ to be right-continuous also where they take the value 0, which simplifies the definition of outcomes. See Section 3 of \cite{RiedelSteg14} for issues with asymmetric games and corresponding weaker regularity restrictions.
}

\item
\begin{equation*}
\alpha^\vartheta_i(t)>0\Rightarrow G_i^\vartheta(t)=1\qquad\text{for all }t\geq 0\text{, a.s.}
\end{equation*}

\end{enumerate}
We further define $G^\vartheta_i(0-)\equiv 0$, $G^\vartheta_i(\infty)\equiv 1$ and $\alpha^\vartheta_i(\infty)\equiv 1$ for every extended mixed strategy.
\end{definition}

Any stopping time $\tau\geq\vartheta$ can be interpreted as a pure strategy corresponding to $G^\vartheta_i(t)=\indi{t\geq\tau}$ for all $t\geq 0$. If the $G^\vartheta_i$ jump to 1 simultaneously, the extensions $\alpha^\vartheta_i$ determine the  actual outcome probabilities, so that simultaneous stopping is avoidable to a certain extent. The $\alpha$-components were first introduced by \citet{FT85} for deterministic games and allow for instantaneous coordination in continuous time. They can also be thought of as measuring the ``preemption intensity'' of firms as they determine the probabilities with which each firm succeeds in investing. In particular, at $\hat\tau^\vartheta:=\inf\{t\geq\vartheta\mid\alpha^\vartheta_1(t)+\alpha^\vartheta_2(t)>0\}$, the extensions $\alpha^\vartheta_\cdot$ determine final outcome probabilities $\lambda^\vartheta_{L,i}$, $\lambda^\vartheta_{L,j}$ and $\lambda^\vartheta_{M}$ as defined in \cite{RiedelSteg14}, which we provide for completeness in Appendix \ref{app:outcome}. Here $\lambda^\vartheta_{L,i}$ and $\lambda^\vartheta_{L,j}$ denote the probabilities that firm~$i$ becomes the leader or follower, respectively. $\lambda^\vartheta_{M}$ is the probability that both firms invest simultaneously. Since the latter outcome is never desirable (in the sense that firms would rather be the follower than end up in a situation of simultaneous investment) this outcome has traditionally been referred to as a ``coordination failure'' (see, for example, \citealp{HK99}).

Now that the strategies have been formalized, we can define the firms' payoffs. As with strategies, these have to be defined for every subgame.

\begin{definition}\label{def:payoffs_extended}
Given two extended mixed strategies $\bigl(G^\vartheta_i,\alpha^\vartheta_i\bigr)$, $\bigl(G^\vartheta_j,\alpha^\vartheta_j\bigr)$, $i,j\in\{1,2\}$, $i\not=j$, the \emph{payoff} of firm $i$ in the subgame starting at $\vartheta\in\T$ is
\begin{align*}%\label{Vi_mixed}
V^\vartheta_i\bigl(G^\vartheta_i,\alpha^\vartheta_i,G^\vartheta_j,\alpha^\vartheta_j\bigr):=E&\biggl[\int_{[0,\hat\tau^\vartheta)}\bigl(1-G^\vartheta_j(s)\bigr)L_s\,dG^\vartheta_i(s)\nonumber\\
&+\int_{[0,\hat\tau^\vartheta)}\bigl(1-G^\vartheta_i(s)\bigr)F_s\,dG^\vartheta_j(s)\nonumber\\
&+\sum_{s\in[0,\hat\tau^\vartheta)}\Delta G^\vartheta_i(s)\Delta G^\vartheta_j(s)M_s\nonumber\\
&+\lambda^\vartheta_{L,i}L_{\hat\tau^\vartheta}+\lambda^\vartheta_{L,j}F_{\hat\tau^\vartheta}+\lambda^\vartheta_{M}M_{\hat\tau^\vartheta}\biggv\F_\vartheta\biggr].
\end{align*}
\end{definition}

Since \nbd{\vartheta}strategies need to be aggregated across subgames, we need to introduce a notion of time consistency.
\begin{definition}\label{def:TC_extended}
An \emph{extended mixed strategy} for firm $i\in\{1,2\}$ in the timing game is a family
\begin{equation*}
\bigl(G_i,\alpha_i\bigr):=\bigl(G_i^\vartheta,\alpha^\vartheta_i\bigr)_{\vartheta\in\T}
\end{equation*}
of extended mixed strategies for all subgames $\vartheta\in\T$.

An extended mixed strategy $\bigl(G_i,\alpha_i\bigr)$ is \emph{time-consistent} if for all $\vartheta\leq\vartheta'\in\T$
\begin{flalign*}
&& \vartheta'\leq t\in\R_+ &\Rightarrow\ G_i^\vartheta(t)=G_i^\vartheta(\vartheta'-)+\bigl(1-G_i^\vartheta(\vartheta'-)\bigr)G_i^{\vartheta'}(t)\quad\text{a.s.} && \\
&\text{and} \\
&& \vartheta'\leq\tau\in\T &\Rightarrow\ \alpha^\vartheta_i(\tau)=\alpha^{\vartheta'}_i(\tau)\quad\text{a.s.} &&
\end{flalign*}
\end{definition}

Now the equilibrium concept is standard.

\begin{definition}\label{def:SPE_extended}
A \emph{subgame perfect equilibrium} for the timing game is a pair $\bigl(G_1,\alpha_1\bigr)$, $\bigl(G_2,\alpha_2\bigr)$ of time-consistent extended mixed strategies such that for all $\vartheta\in\T$, $i,j\in\{1,2\}$, $i\not=j$, and extended mixed strategies $\bigl(G_a^\vartheta,\alpha^\vartheta_a\bigr)$
\begin{equation*}
V_i^\vartheta(G_i^\vartheta,\alpha^\vartheta_i,G_j^\vartheta,\alpha^\vartheta_j)\geq V_i^\vartheta(G_a^\vartheta,\alpha^\vartheta_a,G_j^\vartheta,\alpha^\vartheta_j)\quad\text{a.s.},
\end{equation*}
i.e., such that every pair $\bigl(G^\vartheta_1,\alpha^\vartheta_1\bigr)$, $\bigl(G^\vartheta_2,\alpha^\vartheta_2\bigr)$ is an \emph{equilibrium} in the subgame at $\vartheta\in\T$, respectively.
\end{definition}

\subsection{Preemption Equilibria}\label{subsec:preemeql}

We start our construction of a subgame perfect equilibrium by considering subgames with a first-mover advantage $L_\vartheta>F_\vartheta$, i.e., starting in the preemption region $\cP$, where we can refer to the following established equilibria in which at least one firm switches immediately.

\begin{proposition}[{\cite{RiedelSteg14}, Proposition 3.1}]\label{prop:eqlL>F}
Fix $\vartheta\in\T$ and suppose $\vartheta=\inf\{t\geq\vartheta\mid L_t>F_t\}$ a.s. Then $\bigl(G^\vartheta_1,\alpha^\vartheta_1\bigr)$, $\bigl(G^\vartheta_2,\alpha^\vartheta_2\bigr)$ defined by
\begin{equation*}
\alpha^\vartheta_i(t)=\indi{L_t>F_t}\frac{L_t-F_t}{L_t-M_t}
\end{equation*}
for any $t\in[\vartheta,\infty)$ and $G^\vartheta_i=\indi{t\geq\vartheta}$, $i=1,2$, are an equilibrium in the subgame at $\vartheta$.

The resulting payoffs are $V_i^\vartheta(G_i^\vartheta,\alpha^\vartheta_i,G_j^\vartheta,\alpha^\vartheta_j)=F_\vartheta$.
\end{proposition}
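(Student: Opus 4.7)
The plan is to verify directly that the prescribed strategies yield payoff $F_\vartheta$ and that no unilateral deviation exceeds this, by reducing everything to the instantaneous outcome at $\hat\tau^\vartheta=\vartheta$. First I would show $\hat\tau^\vartheta=\vartheta$ a.s.: on $\{L_\vartheta>F_\vartheta\}$ the prescription gives $\alpha_i^\vartheta(\vartheta)=(L_\vartheta-F_\vartheta)/(L_\vartheta-M_\vartheta)>0$, so $\hat\tau^\vartheta=\vartheta$ by its definition; on the complementary event $\{L_\vartheta=F_\vartheta\}$, the hypothesis $\vartheta=\inf\{t\geq\vartheta\mid L_t>F_t\}$ and continuity of $L,F$ yield a sequence $t_n\downarrow\vartheta$ with $\alpha_i^\vartheta(t_n)>0$, and hence $\hat\tau^\vartheta=\vartheta$ again. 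Since $G_i^\vartheta(t)=0$ for $t<\vartheta=\hat\tau^\vartheta$, all three integrals in Definition~\ref{def:payoffs_extended} vanish, so the payoff collapses to
\begin{equation*}
V_i^\vartheta=E\bigl[\lambda^\vartheta_{L,i}L_\vartheta+\lambda^\vartheta_{L,j}F_\vartheta+\lambda^\vartheta_{M}M_\vartheta\bigv\F_\vartheta\bigr].
\end{equation*}

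With both firms jumping at $\vartheta$ at the common intensity $\alpha=(L_\vartheta-F_\vartheta)/(L_\vartheta-M_\vartheta)$, the symmetric grab-the-dollar outcome sketched in Section~\ref{sec:informal} yields $\lambda^\vartheta_{L,i}=\lambda^\vartheta_{L,j}=(1-\alpha)/(2-\alpha)$ and $\lambda^\vartheta_{M}=\alpha/(2-\alpha)$. A short manipulation, using $(2-\alpha)(L_\vartheta-M_\vartheta)=L_\vartheta+F_\vartheta-2M_\vartheta$, shows
\begin{equation*}
(1-\alpha)L_\vartheta+(1-\alpha)F_\vartheta+\alpha M_\vartheta=F_\vartheta(2-\alpha),
\end{equation*}
so the bracket above equals $F_\vartheta$, as claimed. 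On the boundary event $\{L_\vartheta=F_\vartheta\}$ the identity is trivial since all three candidate payoffs then coincide (in the relevant combination) with $F_\vartheta$.

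For the equilibrium inequality I would fix firm $j$'s prescribed play and consider any \nbd{\vartheta}strategy $(G_a^\vartheta,\alpha_a^\vartheta)$ of firm $i$. Firm $j$'s intensity alone already forces $\hat\tau^\vartheta=\vartheta$ under the deviation, so once more only the outcome at $\vartheta$ matters. If $G_a^\vartheta(\vartheta)=0$ and $\alpha_a^\vartheta(\vartheta)=0$, firm $j$ stops alone and firm $i$ collects exactly $F_\vartheta$. Otherwise the asymmetric grab-the-dollar formulas from \cite{RiedelSteg14} give firm $i$
\begin{equation*}
\frac{\alpha_a^\vartheta(1-\alpha_j^\vartheta)L_\vartheta+\alpha_j^\vartheta(1-\alpha_a^\vartheta)F_\vartheta+\alpha_a^\vartheta\alpha_j^\vartheta M_\vartheta}{\alpha_a^\vartheta+\alpha_j^\vartheta-\alpha_a^\vartheta\alpha_j^\vartheta}
\end{equation*}
when $G_a^\vartheta(\vartheta)=1$; substituting $\alpha_j^\vartheta(L_\vartheta-M_\vartheta)=L_\vartheta-F_\vartheta$ makes the numerator equal to $F_\vartheta$ times the denominator, so the ratio equals $F_\vartheta$ independently of $\alpha_a^\vartheta$. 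A general $G_a^\vartheta(\vartheta)\in(0,1)$ decomposes the outcome as a convex combination of the two extreme cases, each of which yields $F_\vartheta$.

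The main obstacle is the indifference computation and its extension to the boundary event $\{L_\vartheta=F_\vartheta\}$, where the ratios formally take the form $0/0$ as $\alpha\to0$; one has to invoke the continuous extension of the outcome probabilities provided in Appendix~\ref{app:outcome} and note that leader and follower payoffs coincide there, so any limiting split between $\lambda^\vartheta_{L,i}$, $\lambda^\vartheta_{L,j}$ and $\lambda^\vartheta_{M}$ still returns $F_\vartheta=L_\vartheta$. A secondary subtlety is that the deviation payoff is a conditional expectation rather than a pointwise quantity, but since the pointwise identity $V_i=F_\vartheta$ holds \nbd{\F_\vartheta}a.s., the conditional-expectation version follows immediately.
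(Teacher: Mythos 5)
Your proof is correct. Note that the paper does not reproduce a proof of this statement --- it is imported verbatim as Proposition 3.1 from \cite{RiedelSteg14} --- so the only fair comparison is against the natural verification, which is exactly what you give: reduce everything to the instantaneous outcome at $\hat\tau^\vartheta=\vartheta$, compute the outcome probabilities from Definition~\ref{def:outcome}, and check that every unilateral deviation still yields exactly $F_\vartheta$.

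One small point worth tightening: when you handle the deviation with $\alpha_a^\vartheta(\vartheta)=0$, you should distinguish explicitly the case $\hat\tau^\vartheta_i=\vartheta$ (firm $i$'s intensity becomes positive immediately after $\vartheta$, landing you in the third or fourth case of Definition~\ref{def:outcome}) from $\hat\tau^\vartheta_i>\vartheta$ (the second case). In the former, on $\{L_\vartheta>F_\vartheta\}$ you get $\mu_L(0,\alpha_j^\vartheta(\vartheta))=\mu_M(0,\alpha_j^\vartheta(\vartheta))=0$ and hence payoff $F_\vartheta$ again, while on $\{L_\vartheta=F_\vartheta\}$ the fourth case forces $\lambda^\vartheta_M=0$ and any split between $\lambda^\vartheta_{L,i}$ and $\lambda^\vartheta_{L,j}$ returns $F_\vartheta$. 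Your remark about the $0/0$ extension and $\lambda^\vartheta_M=0$ on the boundary already captures the substance of this, so the argument closes; it is just the enumeration of cases of Definition~\ref{def:outcome} that would need to be spelled out in a polished write-up.
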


In these equilibria, the firms are indifferent between stopping and waiting. The latter would mean becoming follower instantaneously, as then the opponent switches with certainty. If $L_\vartheta>F_\vartheta$, then there is a positive probability of simultaneous switching, which is the ``cost of preemption'', driving the payoffs down to $F_\vartheta$. Of particular interest are however subgames $\vartheta$ with $L_\vartheta=F_\vartheta$, where each firm becomes leader or follower with probability $\frac12$ in the equilibrium of Proposition~\ref{prop:eqlL>F} (given the outcome probabilities defined in Section \ref{app:outcome}). This is in particular the case in ``continuation'' equilibria at $\tau_\cP(\vartheta):=\inf\{t\geq\vartheta\mid L_t>F_t\}$ (where the game will hence end with probability 1) for subgames which begin with $L_\vartheta\leq F_\vartheta$. By these continuation equilibria the  firms know that \emph{if} the preemption region is ever reached, then each firm can expect to earn the follower value \emph{at that time}. This observation will turn out to be important for constructing the equilibrium in the case of attrition.

\subsection{Constrained Leader's Stopping Problem}\label{subsec:stop}

We now turn to equilibria for subgames beginning outside the preemption region where we will observe a war of attrition in contrast to typical strategic real option models in the literature, in which only preemption is considered. Our equilibria are closely related to a particular stopping problem that we hence discuss in detail. Its value (function) will also be the player's continuation value in equilibrium and its solution is needed to characterize and understand equilibrium strategies. This stopping problem is also of interest on its own as it is two-dimensional and therefore not at all standard in the optimal stopping literature.

In accordance with Proposition \ref{prop:eqlL>F} we fix the (equilibrium) payoff $F$ whenever the state hits the preemption region $\cP$. Suppose now that only one firm, say $i$, can make the switch before $\cP$ is reached. Then $i$ can determine when to become optimally the leader up to hitting $\cP$, where the game ends with the current value of $F$ as the expected payoff.

Letting $\tau_\cP:=\inf\{t\geq 0\mid(X_t,Y_t)\in\cP\}=\inf\{t\geq 0\mid L_t>F_t\}$ denote the first hitting time of the preemption region $\cP$, firm $i$ now faces the problem of optimally stopping the auxiliary payoff process
\begin{equation*}
\tilde L:=L\indi{t<\tau_\cP}+F_{\tau_\cP}\indi{t\geq\tau_\cP}.
\end{equation*}
It obviously suffices to consider stopping times $\tau\leq\tau_\cP$. We are also only interested in initial states $(X_0,Y_0)=(x,y)\in\cP^c$, which implies $F_{\tau_\cP}=L_{\tau_\cP}$ by continuity. Then the value of our stopping problem is
\begin{equation*}%\label{eq:constrained}
V_{\tilde L}(x,y):=\esssup_{\tau\geq 0}E\bigl[\tilde L_\tau\bigr]=\esssup_{\tau\in[0,\tau_\cP]}E\bigl[L_\tau\bigr]
\end{equation*}
for $(X_0,Y_0)=(x,y)\in\cP^c$ (and $V_{\tilde L}(x,y):=F_0$ for $(X_0,Y_0)=(x,y)\in\cP$). Thanks to the strong Markov property, the solution of this problem can be characterized by identifying the \emph{stopping region} of the state space $\{(x,y)\in\R_+^2\mid V_{\tilde L}(x,y)=\tilde L_0\text{ for }(X_0,Y_0)=(x,y)\}=\cP\cup\{(x,y)\in\cP^c\mid V_{\tilde L}(x,y)=L(0,y)\}$ and the \emph{continuation region} 
\begin{equation*}
\cC:=\{(x,y)\in\cP^c\mid V_{\tilde L}(x,y)>L(0,y)\}\subset\cP^c.
\end{equation*} 
By the continuity of $\tilde L$ (resp.\ $L$) it is indeed optimal to stop as soon as $(X,Y)$ hits $\cC^c$.

The stopping and continuation regions are related to those for the unconstrained problem $\sup_{\tau\geq 0}E\bigl[L_\tau\bigr]$, which only depends on $Y$ and its initial value $Y_0=y$. This is a standard problem of the real options literature, e.g., which is uniquely solved\footnote{%
The solution also holds in the degenerate case $Y_0=y^*=0$, iff $L$ is constant, but then it is not unique, of course.
}
by stopping the first time $Y$ exceeds the threshold
\begin{equation*}
y^*=\frac{\beta_1}{\beta_1-1}(r-\mu_Y)\Bigl(I+\frac{c_B-c_0}{r}\Bigr),
\end{equation*}
with $\beta_1>1$ the positive root of the quadratic equation
\begin{equation}\label{eq:quadratic}
 \mathcal{Q}(\beta)\equiv\frac{1}{2}\sigma_Y^2\beta(\beta-1)+\mu_Y\beta-r=0.
\end{equation}
At $y^*$, the net present value of investing in $B$ as a monopolist is just high enough to offset the option value of waiting.\footnote{%
Above $y^*$, the forgone revenue from any delay is so high that investment is strictly optimal. Below $y^*$, the depreciation effect on the investment cost is dominant to make waiting strictly optimal.
}

Now the constraint is binding iff the stopping region for the unconstrained problem does \emph{not} completely contain the preemption region $\cP$, which is the situation depicted in Figure~\ref{fig:regions} above with
\begin{equation*}
y^*>y_\cP:=(r-\mu_Y)\biggl(\frac{c_B-c_A}{r}+I\biggr).\protect\footnotemark
\end{equation*}
\footnotetext{%
$y_\cP\geq 0$ by our assumption on the parameters.
}%
Indeed, as stopping $L$ is optimal for $Y_t\geq y^*$ in the unconstrained case, it is too under the constraint $\tau\leq\tau_\cP$. Hence, if $y^*\leq y_\cP$, the continuation regions for the constrained and unconstrained problems agree.

%\begin{figure}[ht]
%  \centering
%  \begin{tikzpicture}[inner sep=0pt,minimum size=0pt,label distance=3pt]
%    \draw[->] (-0.5,0) -- (4,0) node[label=right:$x$] {}; % x axis
%    \draw[->] (0,-0.5) -- (0,4) node[label=above:$y$] {}; % y axis
%    \draw[dashed] (0,2) -- (4,2) [] {};
%%   \draw[-] (4/3,2) -- (4,2) [] {};
%    \draw[-] (0,1) -- (3.5,3.5) []{};
%    \draw[-] (0.49,1.35) .. controls (1.9,1.9) and (1.9,1.9) .. (4,1.95) [] {};
%    \draw[dashed] (0,1.35) -- (.49,1.35) [] {};
%
%    \fill[black] (1.5,1.85) circle (.04);
%
%    \node at (0,2) [label=left:$y^*$] {};
%    \node at (0,1.35) [label=left:${\bar y\ \,}$] {};
%    \node at (0,.8) [label=left:${y_\cP}$] {};
%    \node at (4.1,3.5) [label=above:${L=F}$] {};
%    \node at (1.3,2.5) [label=above:$\underset{(L>F)}{\cP}$] {};
%    \node at (3,2.2) [label=above:$\cA$] {};
%    \node at (4,1.9) [label=right:$b(x)$] {};
%    \node at (1.8,1.2) [label=right:$\cC$] {};
%  \end{tikzpicture}
%  %\end{flushleft}
%  \caption{Attrition and Preemption regions.}
%  \label{fig:regions}
%\end{figure}

In the displayed case $y^*>y_\cP$, however, we have a much more complicated, truly two-dimensional stopping problem.\footnote{%
Genuinely two-dimensional problems and their free boundaries are rarely studied in the literature due to their general complexity. Sometimes problems that appear two-dimensional are considered, e.g.\ involving a one-dimensional diffusion and its running supremum, which are then reduced to a one-dimensional, more standard problem. See, e.g., \cite{PS06}.
}
Then the stopping region for $\tilde L$ extends below $\cP\cup\{(x,y)\in\R_+^2\mid y\geq y^*\}$ due to the risk of hitting $\cP$ at a lower value of $Y$ when $(X_0,Y_0)$ is close to $\cP$ but below $y^*$. Stopping is dominated, however, below the value
\begin{equation*}
\bar y:=c_B-c_0+rI,
\end{equation*}
where the drift of $L$ is positive, whence also $\bar y<y^*$ if either is positive.\footnote{%
The drift of $L$, $-e^{-rt}(Y_t-\bar y)\,dt$, can be derived by applying It\=o's formula to $L(t,Y_t)$ in \eqref{L(y)}. With $\bar y$ we also have
\begin{equation*}
L_t=-\frac{c_0}{r}+E\biggl[\int_t^\infty e^{-rs}(Y_s-\bar y)\,ds\biggv\F_t\biggr].
\end{equation*}
It holds that $\bar y=0\Leftrightarrow y^*=0$ and otherwise $y^*/\bar y\in(0,1)$, since $\beta_1\mu_Y<r$ (obviously if $\mu_Y\leq 0$ and due to $\mathcal{Q}(r/\mu_Y)>0$ in \eqref{eq:quadratic} if $\mu_Y>0$); furthermore $\bar y>y_\cP$ iff $(c_0-c_A)/(c_B-c_A+rI)<\mu_Y/r$, e.g., if $c_0$ is sufficiently small or if $I$ or $c_B$ are sufficiently large (while $\mu_Y>0$).
}
Furthermore we show in the proof of the following proposition that it is always worthwhile to wait until $Y$ exceeds at least $y_\cP$ (if this is less than $y^*$). The continuation region $\cC$ and the stopping region $\cC^c$ for $\tilde L$ are then indeed separated by a function $b(x)$ as shown.

\begin{proposition}\label{prop:b(x)}
There exists a function $b:\R_+\to[\min(y_\cP,y^*),y^*]$ which is nondecreasing and continuous, such that, up to the origin, 
\begin{equation*}
\cC=\{(x,y)\in\R_+^2\mid\,y<b(x)\}\subset\cP^c
\end{equation*}
and $\tau_{\cC^c}:=\inf\{t\geq 0\mid(X_t,Y_t)\in\cC^c\}$ attains $V_{\tilde L}(x,y)=E\bigl[\tilde L_{\tau_{\cC^c}}\bigr]$ for $(X_0,Y_0)=(x,y)\in\R_+^2$. The origin belongs to $\cC$ iff $y^*>0$, i.e., iff $\bar y>0$. In that case $b$ further satisfies $b(x)\geq\min(\bar y,y_\cP+x(r-\mu_Y)/(r-\mu_X))$ (and otherwise $b\equiv y^*$ and $\cC=\emptyset$).
\end{proposition}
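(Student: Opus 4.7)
The plan is to extract the shape of the continuation region from two monotonicity properties of $V_{\tilde L}$, both obtained by coupling the SDEs through common driving Brownian motions. For the first, fix $y$ and compare $x_1<x_2$: the coupled paths satisfy $X^{(2)}=(x_2/x_1)X^{(1)}$, so $X^{(2)}\ge X^{(1)}$ pointwise; since the $\cP$-boundary $y=y_\cP+x(r-\mu_Y)/(r-\mu_X)$ is increasing in $x$, $\tau_\cP^{(2)}\ge\tau_\cP^{(1)}$, so the feasible set of stopping times only grows while the $y$-dependent payoff $L$ is unchanged. Hence $x\mapsto V_{\tilde L}(x,y)$ is nondecreasing. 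For the second, fix $x$ and take $y_2>y_1>0$; coupling multiplicatively $Y^{(2)}=(y_2/y_1)Y^{(1)}$ gives $Y^{(2)}\ge Y^{(1)}$ and $\tau_\cP^{(2)}\le\tau_\cP^{(1)}$. Any $\tau\le\tau_\cP^{(2)}$ is feasible for both problems, and the affine $y$-dependence of $L$ together with the supermartingale property of $(e^{-rt}Y_t)$ (valid since $\mu_Y<r$) yields
\begin{equation*}
E[L_\tau^{(2)}]-L(0,y_2)\le E[L_\tau^{(1)}]-L(0,y_1).
\end{equation*}
Passing to the supremum, $y\mapsto V_{\tilde L}(x,y)-L(0,y)$ is nonincreasing.

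Set $b(x):=\sup\{y\ge 0\mid V_{\tilde L}(x,y)>L(0,y)\}$. The monotonicity in $y$ gives $\cC=\{(x,y)\mid y<b(x)\}$ up to the origin, and the monotonicity in $x$ shows $b$ is nondecreasing. The upper bound $b(x)\le y^*$ follows by comparison with the unconstrained problem $\sup_{\tau\ge 0}E[L_\tau]$: its optimal threshold is $y^*$, so for $y\ge y^*$ the constraint cannot help and $V_{\tilde L}(x,y)=L(0,y)$. For the lower bound $b(x)\ge\min(\bar y,y_\cP+x(r-\mu_Y)/(r-\mu_X))$ when $\bar y>0$: for $(x,y)$ with $y$ strictly below this min, we have $y<\bar y$ and $(x,y)$ in the interior of $\cP^c$, so $\tau_\cP>0$ a.s. Applying It\=o's formula to $L(t,Y_t)$ and using its drift $-e^{-rt}(Y_t-\bar y)$ (strictly positive near $y$), the feasible rule $\tau=\epsilon\wedge\tau_\cP$ gives
\begin{equation*}
E[L_\tau]-L(0,y)=E\Bigl[\int_0^\tau-e^{-rs}(Y_s-\bar y)\,ds\Bigr]>0
\end{equation*}
for all sufficiently small $\epsilon>0$, hence $(x,y)\in\cC$. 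The complementary range bound $b(x)\ge\min(y_\cP,y^*)$ follows either from this argument or, when sharper, by evaluating the feasible hitting-time rule $\tau=\inf\{t:Y_t\ge\min(y_\cP,y^*)\}\le\tau_\cP$ and using the strict gain it delivers over immediate stopping.

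Continuity of $V_{\tilde L}$ as a function of the initial state follows from the Feller property of $(X,Y)$ and dominated convergence against the class-(D) envelope of footnote~\ref{fn:XclassD}; hence $\cC$ is open, which together with the monotonicity of $b$ yields left-continuity. Right-continuity is the main obstacle I anticipate: a jump $b(x_0+)>b(x_0)$ would locate some $y\in(b(x_0),b(x_0+))$ in the stopping region at $x_0$ yet inside $\cC$ for all $x$ slightly larger; ruling this out requires showing that the strict excess $V_{\tilde L}(x,y)-L(0,y)$ produced by the drift/hitting-time constructions does not collapse to zero as $x\downarrow x_0$, which must be established using the regularity of $\tau_\cP$ in the initial state. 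Optimality of $\tau_{\cC^c}$ then follows from the classical Snell-envelope theorem applied to the continuous payoff $\tilde L$ of class (D). For the origin, a direct computation applies: at $(0,0)$ one has $Y\equiv 0$ and $\tau_\cP=\infty$, so $L_t=-c_0/r-e^{-rt}\bar y/r$ increases monotonically to $-c_0/r$ as $t\to\infty$; hence $V_{\tilde L}(0,0)=-c_0/r$, strictly exceeding $L(0,0)$ iff $\bar y>0$, while the degenerate case $\bar y=0$ gives $L\equiv -c_0/r$ and $\cC=\emptyset$ with $b\equiv y^*=0$.
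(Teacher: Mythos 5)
Your monotonicity-by-coupling arguments (in $x$ via $\tau_\cP$ growing, in $y$ via the affine payoff and the supermartingale $e^{-rt}Y^1_t$), the definition of $b$, the bounds $b\le y^*$ and $b\geq\min(\bar y,\,y_\cP+x(r-\mu_Y)/(r-\mu_X))$ from positive drift, and the origin computation are all essentially the paper's own steps. But you have correctly diagnosed, without resolving, the hard part: right-continuity of $b$. You write that one ``must'' rule out the excess $V_{\tilde L}-L$ collapsing as $x\downarrow x_0$ via regularity of $\tau_\cP$ in the initial state, but this is not carried out, and the general Feller/dominated-convergence argument you invoke for continuity of $V_{\tilde L}$ will not by itself deliver it (it gives openness of $\cC$ and hence left-continuity, which you correctly note, but says nothing about a possible upward jump $b(x_0+)>b(x_0)$).

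The ingredient you are missing is a ray argument exploiting the multiplicative homogeneity of geometric Brownian motion together with the affine form of $\partial\cP$. With the coupled representation $(X^x,Y^y)=(xX^1,yY^1)$, the condition $(X^x_t,Y^y_t)\in\cP^c$ reads
\[
\frac{y}{r-\mu_Y}Y^1_t-\frac{x}{r-\mu_X}X^1_t\le\frac{y_\cP}{r-\mu_Y}\geq 0,
\]
whose left side scales linearly in $(x,y)$ along rays from the origin, so shrinking $(x_0,y_0)$ radially can only increase $\tau_\cP$. Combined with the supermartingale estimate you already use for the $y$-monotonicity, this shows $(x_0,y_0)\in\cC \Rightarrow (x,xy_0/x_0)\in\cC$ for every $x\in(0,x_0]$, i.e.\ $b(x)\ge xb(x_0)/x_0$, which is exactly the bound $b(x+h)\le b(x)(1+h/x)$ that furnishes right-continuity. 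Separately, your claim that $b(x)\geq\min(y_\cP,y^*)$ ``follows \ldots by evaluating the feasible hitting-time rule'' is only asserted: the strict gain over immediate stopping for $y^*>y_\cP$ actually requires the explicit log-normal hitting-value formula and a sign check on its derivative in $y$ (this is what pins down $b(0)=b(0+)=\min(y_\cP,y^*)$, the continuity at the left endpoint). Without these two pieces the characterization of $b$ is incomplete.
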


\noindent
{\it Proof:} See appendix.
\medskip

We show in Lemma \ref{lem:b<y*} below that indeed $b<y^*$ if $y_\cP<y^*$; nevertheless $b(x)\nearrow y^*$ since the value of the constrained problem converges to that of the unconstrained problem as $X_0\to\infty$.

By the shape of $\cC$ identified in Proposition \ref{prop:b(x)}, a firm that was sure never to become follower outside $\cP$ would switch as soon as the state leaves $\cC$: any delay on $\{Y>b(X)\}\cap\{F>L\}$ would induce a running expected loss given by the drift of $L$, which is $-e^{-rt}(Y_t-\bar y)\,dt<0$ there. Switching would yield the other, staying firm the prize $F>L$, however. Since both firms face the same situation, we obtain a war of attrition in the stopping region of the constrained stopping problem: there is a running cost of waiting for a prize that is obtained if the opponent gives in and switches. Therefore we call it the \emph{attrition region} 
\begin{equation*}
\cA:=\{(x,y)\in\R_+^2\mid y\geq b(x)\}\setminus\cP.\protect\footnotemark
\end{equation*}

\footnotetext{\label{fn:A=Cc}%
Formally we only have $\cA=\cC^c\cap\cP^c$ up to the origin by Proposition \ref{prop:b(x)}, but we prefer to work with the boundary representation. Precisely $\cA\supseteq\cC^c\cap\cP^c$, and as $b(x)$ lies below the preemption boundary $y_\cP+x(r-\mu_Y)/(r-\mu_X)$, in fact $b(0)=\min(y_\cP,y^*)$, so we have $(0,0)\in\cC\cap\cA$ iff $y^*>0\geq y_\cP$, resp.\ $\cA=\cC^c\cap\cP^c$ iff $y^*\leq 0$ or $y_\cP>0$.
}%
In order to derive equilibria outside the preemption region $\cP$, we need to know exactly the expected cost of abstaining to stop in $\cA$. In general this need not be just the (negative of the) drift of $L$, $e^{-rt}(Y_t-\bar y)\,dt\geq 0$ in $\cA$, since the state can transit very frequently between $\cA$ and $\cC$.\footnote{%
This ``switching on and off'' of the waiting cost could lead to non-trivial behavior, depending on the \emph{local time} our process $(X,Y)$ spends on the boundary between the two regions given by $y=b(x)$. See \citet{Jacka93} for an example based on Brownian motion where that local time is non-trivial.
}
Our equilibrium verification uses the following characterization of the constrained stopping problem and the cost of stopping too late. By the general theory of optimal stopping, the value \emph{process} of the stopping problem $V_{\tilde L}(X,Y):=U_{\tilde L}$ is the smallest supermartingale dominating the payoff process $\tilde L$, called the Snell envelope. As a supermartingale it has a decomposition $U_{\tilde L}=M_{\tilde L}-D_{\tilde L}$ with a martingale $M_{\tilde L}$ and a nondecreasing process $D_{\tilde L}$ called the \emph{compensator} starting from $D_{\tilde L}(0)=0$. Hence $U_{\tilde L}(0)-E[U_{\tilde L}(\tau)]=E[D_{\tilde L}(\tau)]\geq 0$ is the expected cost if one considers only stopping after $\tau$.

Given the geometry of the boundary between $\cA$ and $\cC$ that we have identified in Proposition \ref{prop:b(x)}, the increments of $D_{\tilde L}$ are indeed given by the (absolute value of the) drift of $L$ where the state is in $\cA$.

\begin{proposition}\label{prop:dD_Lswitch}
With $b(x)$ as in Proposition \ref{prop:b(x)} and $(X_0,Y_0)\not=(0,0)$ we have
\begin{equation}\label{dDtildeL}
dD_{\tilde L}(t)=\indi{t<\tau_\cP,Y_t\geq b(X_t)}e^{-rt}(Y_t-\bar y)\,dt
\end{equation}
for all $t\in\R_+$ a.s. If $(X_0,Y_0)=(0,0)$, \eqref{dDtildeL} still holds if $y^*\leq 0$ or $y_\cP>0$; otherwise $dD_{\tilde L}\equiv 0$.
\end{proposition}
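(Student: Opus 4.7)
My approach combines an Itô's-formula computation of the drift of $L$ with standard properties of the compensator of the Snell envelope $U_{\tilde L}$. Applying Itô to $L(t,y)$ in \eqref{L(y)}~-- which is $C^{1,2}$ and affine in $y$, so $\partial^2_{yy}L \equiv 0$~-- gives after simplification
\begin{equation*}
dL_t = -e^{-rt}(Y_t-\bar y)\,dt + \frac{\sigma_Y Y_t e^{-rt}}{r-\mu_Y}\,dB^Y_t,
\end{equation*}
so that the bounded-variation part of $L$ is $A^L_t = -\int_0^t e^{-rs}(Y_s-\bar y)\,ds$. Because $L$ is continuous and $L=F$ on $\partial\cP$, one may rewrite $\tilde L_t = L_{t\wedge\tau_\cP}$, whence optional stopping identifies the bounded-variation part of $\tilde L$ as $A^{\tilde L}_t = -\int_0^{t\wedge\tau_\cP} e^{-rs}(Y_s-\bar y)\,ds$.

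Next, I invoke two standard facts from the theory of Snell envelopes applied to the continuous process $\tilde L$ (of class D by footnote \ref{fn:XclassD}): (a) on the open random set $\{U_{\tilde L} > \tilde L\}$ the process $U_{\tilde L}$ is a local martingale, since stopping is strictly suboptimal on every excursion into this set, hence $dD_{\tilde L}\equiv 0$ there; (b) wherever $U_{\tilde L} = \tilde L$ on a random interval, the continuous nonnegative process $U_{\tilde L}-\tilde L$ vanishes identically, so by uniqueness of the semimartingale decomposition its bounded-variation parts cancel, giving $-dD_{\tilde L} = dA^{\tilde L}$. Combining (a)-(b) produces
\begin{equation*}
dD_{\tilde L}(t) = \indi{U_{\tilde L}(t) = \tilde L_t,\, t < \tau_\cP}\, e^{-rt}(Y_t - \bar y)\,dt,
\end{equation*}
using that $dA^{\tilde L}$ vanishes past $\tau_\cP$.

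Finally, Proposition \ref{prop:b(x)} identifies (up to the origin) the stopping region intersected with $\cP^c$ as the attrition region $\cA = \{y\geq b(x)\}\setminus\cP$, so that $\{U_{\tilde L}(t) = \tilde L_t\}\cap\{t < \tau_\cP\} = \{Y_t \geq b(X_t),\, t<\tau_\cP\}$ a.s., yielding \eqref{dDtildeL}. The lower bound $b(x) \geq \min(\bar y,\, y_\cP + x(r-\mu_Y)/(r-\mu_X))$ furnished by Proposition \ref{prop:b(x)} ensures $Y_t\geq \bar y$ Lebesgue-a.e.\ on $\cA$ (the alternative places $(X_t,Y_t)$ on the preemption boundary, which has zero $t$-measure a.s.), so the integrand is nonnegative, consistent with $D_{\tilde L}$ being nondecreasing. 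The degenerate origin case is handled by direct inspection of the frozen path. The delicate step is (b): although classical for continuous, regular payoffs, it must handle a potentially very intricate coincidence set~-- the state may traverse $\partial\cC$ infinitely often~-- and rule out any singular compensator supported on this boundary, which follows from the continuity of $\tilde L$ and our smoothness assumptions on the diffusions.
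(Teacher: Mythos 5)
Your Itô computation and the identification of the stopping region with $\cA$ are both correct, and your step (a) (that $U_{\tilde L}$ is a local martingale on $\{U_{\tilde L}>\tilde L\}$, so $dD_{\tilde L}=0$ there) is standard and fine. The problem is your step (b), which is exactly where the real mathematical content lies and where the argument as you have written it does not hold up.

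You reason that on the coincidence set $\{U_{\tilde L}=\tilde L\}$ the bounded-variation parts of $U_{\tilde L}$ and $\tilde L$ must agree "by uniqueness of the semimartingale decomposition." But uniqueness of the Doob--Meyer/semimartingale decomposition applies to a single process on all of $[0,\infty)$, not to two processes that happen to coincide on a random, potentially nowhere-dense set. In general, the compensator $D_{\tilde L}$ can carry a \emph{singular} component concentrated on the boundary $\partial\{U_{\tilde L}=\tilde L\}$, proportional to the local time of $U_{\tilde L}-\tilde L$ at $0$. Your closing sentence asserts this singular part is ruled out "by continuity of $\tilde L$ and our smoothness assumptions on the diffusions," but this is precisely the claim that needs proof, and it is false in general: the paper's own footnote points to \citet{Jacka93} for an explicit Brownian example in which this local time is nontrivial even though the payoff is continuous. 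So continuity plus smooth diffusion coefficients is not enough.

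The paper's proof closes this gap by invoking Jacka's criterion explicitly: one must show $L^0(U_{\tilde L}-\tilde L)\equiv 0$. The ingredients are (i) Jacka's Theorem~3 bound $dL^0_t(U_{\tilde L}-\tilde L)\leq 2\,\indi{U_{\tilde L}=\tilde L,\,t<\tau_\cP}\,dA^-_t$, (ii) the observation (Jacka's Theorem~6) that $dL^0$ is supported on $\{(X_t,Y_t)\in\partial\cC\}$, and (iii) the fact that $\partial\cC=\{y=b(x)\}$ has Lebesgue measure zero in $\R_+^2$ while $(X,Y)$ has an absolutely continuous (log-normal) transition density, so the expected integral over times spent on $\partial\cC$ vanishes. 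Without some version of (iii) — a geometric fact about the free boundary $b$, not a generic smoothness fact — your argument cannot conclude. You should also handle the degenerate axes separately as the paper does (when $Y_0=0$ or $X_0=0$ the problem is effectively one-dimensional or deterministic and the decomposition is explicit), since the local-time argument as stated requires $X_0,Y_0>0$.
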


\noindent
{\it Proof:} See appendix.
\medskip

With the concept of the Snell envelope $U_{\tilde L}$ we can now show that $b$ is not trivial, i.e., not just the constraint applied to the unconstrained solution.

\begin{lemma}\label{lem:b<y*}
If $y_\cP<y^*$, then also $b<y^*$.
\end{lemma}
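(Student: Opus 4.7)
The plan is to argue by contradiction, using unique continuation for the elliptic PDE satisfied by the value function on $\cC$. Suppose that $b(x_0)=y^*$ for some $x_0\geq 0$. Since $b$ is nondecreasing and bounded above by $y^*$, this forces $b(x)=y^*$ for every $x\geq x_0$. I would then compare $V_{\tilde L}$ to the classical unconstrained real-options value
\begin{equation*}
V^u(y):=-\frac{c_0}{r}+\Bigl(\frac{y^*}{r-\mu_Y}-\frac{c_B-c_0}{r}-I\Bigr)\Bigl(\frac{y}{y^*}\Bigr)^{\beta_1}\quad\text{for }y\leq y^*,
\end{equation*}
which solves $\sup_\tau E[L_\tau]$ by standard theory, and introduce $w(x,y):=V^u(y)-V_{\tilde L}(x,y)$ on $\cC$.

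Then $w\geq 0$ (the preemption constraint cannot raise the value), and both $V_{\tilde L}$ (by optimality in $\cC$) and $V^u$ (viewed as a function constant in $x$, by the one-dimensional unconstrained theory) satisfy the discounted generator equation $\mathcal{L}\varphi=0$; hence $w$ is a nonnegative solution of the same elliptic PDE with real-analytic polynomial coefficients. On the boundary arc $\{(x,y^*):x\geq x_0\}\subset\partial\cC$, the identities $V_{\tilde L}(x,y^*)=L(0,y^*)=V^u(y^*)$ give $w=0$, and the smooth-fit principle at the free boundary $y=b(x)=y^*$ (standard for regular diffusions, and justified here by matching $V_{\tilde L}(x,\cdot)$ with the linear stopping payoff $L(0,\cdot)$ from above and with $V^u$ from below, all yielding the one-sided derivative $1/(r-\mu_Y)$) gives $\partial_y w(x,y^*)=0$ as well. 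Hopf's boundary-point lemma then precludes $w\not\equiv 0$ in a neighborhood of this arc inside $\cC$, since $\mathcal{L}$ is uniformly elliptic for $x,y>0$ and both $w$ and its inward normal derivative vanish there; real-analyticity of solutions to $\mathcal{L}\varphi=0$, combined with the connectedness of $\cC$ (the open region below the graph of the continuous nondecreasing function $b$), propagates $w\equiv 0$ throughout $\cC$.

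The contradiction is obtained on the absorbing axis $\{x=0\}$: since $X\equiv 0$ there, the constrained problem collapses to a one-dimensional one with upper boundary $y_\cP=b(0)$, and an explicit computation (either stopping at $\tau_\cP=\tau_{y_\cP}$ or never stopping, whichever is better) yields $V_{\tilde L}(0,y)<V^u(y)$ strictly for every $y\in(0,y_\cP)\subset\cC$. This uses that $y^*$ is the unique maximizer of the map $y'\mapsto\bigl(y'/(r-\mu_Y)-(c_B-c_0)/r-I\bigr)(y')^{-\beta_1}$ on $\R_+$, together with $y_\cP<y^*$ by hypothesis. Thus $w(0,y)>0$ on $(0,y_\cP)$, contradicting $w\equiv 0$; the assumption $b(x_0)=y^*$ was therefore false. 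I expect the main technical obstacle to be the rigorous verification of the smooth-fit condition $\partial_y V_{\tilde L}(x,y^*-)=1/(r-\mu_Y)$ for $x\geq x_0$ in this genuinely two-dimensional constrained setting. A secondary concern is the degeneracy of $\mathcal{L}$ at the coordinate axes, which is largely sidestepped because Hopf's lemma and analytic continuation are applied at interior points with $x,y>0$.
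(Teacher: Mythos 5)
Your proposal is correct in outline and uses a genuinely different route from the paper. The paper's own argument stays entirely within the probabilistic framework it has already built: it uses the Doob--Meyer decomposition $U_{\tilde L}=M_{\tilde L}-D_{\tilde L}$ of the Snell envelope together with the explicit compensator formula of Proposition~\ref{prop:dD_Lswitch}, and then chooses the stopping time $\tau=\inf\{t\geq 0\mid X_t\leq\hat x\}\wedge\tau_\cP$ precisely so that $D_{\tilde L}(\tau)=D_L(\tau)$ pathwise (since $b(X_t)=y^*$ on $\{t<\tau\}$), while $U_{\tilde L}(0)=U_L(0)$ but $E[U_{\tilde L}(\tau)]<E[U_L(\tau)]$ because the optimal unconstrained stopping time is infeasible under the constraint on $\{Y_\tau<y^*\}$; this contradiction is self-contained given Proposition~\ref{prop:dD_Lswitch}. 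Your approach instead translates the problem into the elliptic PDE language: you exploit the sandwich $L(0,\cdot)\leq V_{\tilde L}(x,\cdot)\leq V^u(\cdot)$ to pin down $\partial_y V_{\tilde L}(x,y^*-)=1/(r-\mu_Y)$ (a genuinely elegant way to get smooth fit in a 2D constrained problem, since both $L$ and $V^u$ are $C^1$ and tangent at $y^*$), then invoke the Hopf lemma/strong minimum principle and analytic continuation to propagate $w\equiv 0$, and finally obtain a contradiction from the explicit one-dimensional computation on $\{x=0\}$ using $b(0)=y_\cP<y^*$. The trade-off is that you sidestep Proposition~\ref{prop:dD_Lswitch} entirely but instead rely on $V_{\tilde L}\in C^2$ solving $\mathcal{L}V_{\tilde L}=0$ classically on $\cC\cap\{x,y>0\}$ -- standard for nondegenerate diffusions with smooth payoffs, but a separate piece of regularity theory that the paper never needs to establish. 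Note also that the Hopf-lemma step and the analytic-continuation step are partly redundant: once you have $w\geq 0$, $\mathcal{L}w=0$ (with zeroth-order coefficient $-r<0$) and $w=0$, $\partial_y w=0$ on a relatively open boundary arc, either tool alone, applied on the connected set $\cC\cap\{x,y>0\}$ and extended by continuity to $\{x=0\}$, gives $w\equiv 0$ and hence the contradiction. Both proofs use $y_\cP<y^*$ in essentially the same way: yours through $b(0)=y_\cP<y^*$ on the axis and the unique maximizer property of $y^*$, the paper's through $Y_\tau<y^*$ being reachable before hitting $\cP$.
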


\begin{proof}
Suppose $b(\hat x)=y^*>y_\cP$, hence $y^*>0$. Then in particular for all $X_0\geq\hat x$ and $Y_0=y^*$, $U_{\tilde L}(0)=\tilde L_0=L_{0}=U_L(0)$ with $U_L$ the (unconstrained) Snell envelope of $L$. For any stopping time $\tau$,
\begin{align*}
U_{\tilde L}(0)-E[U_{\tilde L}(\tau)]=E[D_{\tilde L}(\tau)]=E\biggl[\int_0^{\tau\wedge\tau_\cP}\indi{Y_t\geq b(X_t)}e^{-rt}(Y_t-\bar y)\,dt\biggr]
\end{align*}
and similarly for the unconstrained problem
\begin{align*}
U_{L}(0)-E[U_{L}(\tau)]=E[D_{L}(\tau)]=E\biggl[\int_0^{\tau}\indi{Y_t\geq y^*}e^{-rt}(Y_t-\bar y)\,dt\biggr].
\end{align*}
Let now $X_0>\hat x$, $Y_0=y^*$ and $\tau=\inf\{t\geq 0\mid X_t\leq\hat x\}\wedge\tau_\cP>0$. Then $U_{\tilde L}(\tau)<U_L(\tau)$ on $\{Y_\tau<y^*\}$, which has positive probability by our nondegeneracy assumption, so $E[U_{\tilde L}(\tau)]<E[U_{L}(\tau)]$.\footnote{%
For $Y_0>0$, the optimal stopping time for $L$, i.e., the first time $Y$ exceeds $y^*$, is unique (up to nullsets), but not admissible in the constrained problem for $Y_0,y_\cP<y^*$, so analogously $U_{\tilde L}(\tau)<U_L(\tau)$ on $\{Y_\tau<y^*\}$.
}
This contradicts all other terms in the previous two displays being equal, respectively.
\end{proof}

\begin{remark}\label{rem:b(x)}
It is a difficult problem to characterize $b$ more explicitly. However, by similar arguments as in the proof of Lemma \ref{lem:b<y*} one can obtain a scheme to approximate $b$ (see also our numerical study in Section \ref{sec:example}). The Snell envelope $U_{\tilde L}$, being a supermartingale of class (D), converges in $L^1(P)$ to $U_{\tilde L}(\infty)=\tilde L_\infty=L_{\tau_\cP}$. Also its martingale and monotone parts converge in $L^1(P)$ as $t\to\infty$, respectively. If $Y_0\geq b(X_0)$, we further have $U_{\tilde L}(0)=\tilde L_0=L_{0}$. Therefore,
\begin{align*}
U_{\tilde L}(0)-E[U_{\tilde L}(\infty)]&=L_{0}-E[L_{\tau_\cP}]=E\biggl[\int_0^{\tau_\cP}e^{-rt}(Y_t-\bar y)\,dt\biggr]\\
&=E[D_{\tilde L}(\infty)]=E\biggl[\int_0^{\tau_\cP}\indi{Y_t\geq b(X_t)}e^{-rt}(Y_t-\bar y)\,dt\biggr]
\end{align*}
and hence
\begin{equation}\label{eq:b(x)}
Y_0\geq b(X_0)\quad\Rightarrow\quad E\biggl[\int_0^{\tau_\cP}\indi{Y_t<b(X_t)}e^{-rt}(Y_t-\bar y)\,dt\biggr]=0.
\end{equation}
Since $y^*\geq b$, one can also use $\tau_\cP\wedge\inf\{t\geq 0\mid Y_t\geq y^*\}$ in \eqref{eq:b(x)}. Any candidate $\hat b$ for $b$ should satisfy $\hat b\geq \bar y$ like $b$ itself. Therefore, if $\hat b$ is (locally) too high (low), the expectation in \eqref{eq:b(x)} will be positive (negative) for $Y_0=\hat b(X_0)$.
\end{remark}

%\begin{remark}\label{rem:b(x)}
%The Snell envelope $U_{\tilde L}$, being a supermartingale of class (D), converges in $L^1(P)$ to $U_{\tilde L}(\infty)=\tilde L_\infty=L_{\tau_\cP}$. Also its martingale and monotone parts converge in $L^1(P)$ as $t\to\infty$, respectively. If $Y_0=b(X_0)$ we further have $U_{\tilde L}(0)=\tilde L_0=L_{0}$. Therefore,
%\begin{align*}
%&U_{\tilde L}(0)-E[U_{\tilde L}(\infty)]=L_{0}-E[L_{\tau_\cP}]\\
%&=\frac{b(X_0)}{r-\mu_Y}-\frac{c_B-c_0}{r}-I-E\biggl[e^{-r\tau_\cP}\biggl(\frac{X_{\tau_\cP}}{r-\mu_X}-\frac{c_A-c_0}{r}\biggr)\biggr]\\
%&=E[D_{\tilde L}(\infty)]\\
%&=E\biggl[\int_0^{\tau_\cP}\indi{Y_t\geq b(X_t)}e^{-rt}(Y_t-c_B+c_0-rI)\,dt\biggr].
%\end{align*}
%If we denote the solution to \eqref{SDE} for $X_0=Y_0=1$ by $(X^1,Y^1)$, we have for arbitrary $(X_0,Y_0)=(x,b(x))\in\R_+^2$
%\begin{align*}
%&\frac{b(x)}{r-\mu_Y}-\frac{c_B-c_0}{r}-I-E\biggl[e^{-r\tau_\cP(x,b(x))}\biggl(\frac{xX^1_{\tau_\cP(x,b(x))}}{r-\mu_X}-\frac{c_A-c_0}{r}\biggr)\biggr]\\
%&=E\biggl[\int_0^{\tau_\cP(x,b(x))}\indi{b(x)Y^1_t\geq b(xX^1_t)}e^{-rt}(b(x)Y^1_t-c_B+c_0-rI)\,dt\biggr],
%\end{align*}
%where we indicate the dependence of the hitting time of $\cP$ on the initial point by $\tau_\cP(x,b(x))$. This can be understood as an integral-type equation for determining $b(\cdot)$.
%\end{remark}

\subsection{Subgame Perfect Equilibria}\label{subsec:eql}

We are now ready to combine the previous results to construct equilibria for arbitrary subgames. If the state reaches $\cP$, there is preemption with immediate switching by some firm. In the continuation region $\cC$, the firms just wait. In the attrition region $\cA$, however, the firms switch at a specific rate such that the resulting probability for the other to become (more profitable) follower compensates for the negative drift of the leader's payoff, making each indifferent to wait or switch immediately.

\begin{proposition}\label{prop:eqlF>L}
Suppose $(X_0,Y_0)\not=(0,0)$ or $y_\cP>0$. Fix $\vartheta\in\T$ and set $\tau_\cP:=\inf\{t\geq\vartheta\mid L_t>F_t\}$. Then $\bigl(G^\vartheta_1,\alpha^\vartheta_1\bigr)$, $\bigl(G^\vartheta_2,\alpha^\vartheta_2\bigr)$ defined by
\begin{equation}\label{eq:eqlF>L}
\frac{dG^\vartheta_i(t)}{1-G^\vartheta_i(t)}=\frac{\indi{Y_t\geq b(X_t)}(Y_t-\bar y)\,dt}{X_t/(r-\mu_X)-(Y_t-y_\cP)/(r-\mu_Y)}
\end{equation}
on $[\vartheta,\tau_\cP)$ and $G^\vartheta_i(t)=1$ on $[\tau_\cP,\infty]$ and
\begin{equation*}
\alpha^\vartheta_i(t)=\indi{L_t>F_t}\frac{L_t-F_t}{L_t-M_t}
\end{equation*}
on $[\vartheta,\infty]$, $i=1,2$, are an equilibrium in the subgame at $\vartheta$.

The resulting payoffs are
\begin{equation*}
V_i^\vartheta(G_i^\vartheta,\alpha^\vartheta_i,G_j^\vartheta,\alpha^\vartheta_j)=V_{\tilde L}(X_\vartheta,Y_\vartheta)=\esssup_{\tau\geq\vartheta}E\bigl[\tilde L_\tau\bigv\F_\vartheta\bigr]
\end{equation*}
with $\tilde L:=L\indi{t<\tau_\cP}+F_{\tau_\cP}\indi{t\geq\tau_\cP}$.
\end{proposition}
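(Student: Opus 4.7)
The plan is to first verify that the proposed family constitutes a valid pair of time-consistent extended mixed strategies in the sense of Definition~\ref{def:alpha}, and then to establish the equilibrium property through a dominance argument built on the Snell envelope $U_{\tilde L}$ characterized in Proposition~\ref{prop:dD_Lswitch}.

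For well-definedness I would note that on $[\vartheta,\tau_\cP)$ the denominator $X_t/(r-\mu_X)-(Y_t-y_\cP)/(r-\mu_Y)$ is precisely $e^{rt}(F_t-L_t)$ and is strictly positive. Hence the prescribed hazard $\lambda_t$ is a locally integrable adapted process on $\{Y\geq b(X)\}\cap[\vartheta,\tau_\cP)$, which uniquely determines an absolutely continuous nondecreasing $[0,1)$-valued process $G^\vartheta_i$ on $[\vartheta,\tau_\cP)$ via $1-G^\vartheta_i(t)=\exp\bigl(-\int_\vartheta^t\lambda_s\,ds\bigr)$. Extending $G^\vartheta_i\equiv 1$ on $[\tau_\cP,\infty]$ and letting $\alpha^\vartheta_i$ handle the jump-to-one inside $\cP$ produces a right-continuous strategy in the sense of Definition~\ref{def:alpha}, and time-consistency is immediate because $\lambda$ depends only on the current state.

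For the equilibrium property, I would fix an arbitrary alternative $(G^\vartheta_a,\alpha^\vartheta_a)$ for firm $i$ and introduce the auxiliary process
\[
W(t):=\bigl(1-G^\vartheta_a(t)\bigr)\bigl(1-G^\vartheta_j(t)\bigr)U_{\tilde L}(t)+\int_{[\vartheta,t]}\bigl(1-G^\vartheta_j(s)\bigr)L_s\,dG^\vartheta_a(s)+\int_{[\vartheta,t]}\bigl(1-G^\vartheta_a(s-)\bigr)F_s\,dG^\vartheta_j(s).
\]
Since $G^\vartheta_j$ is continuous on $[\vartheta,\tau_\cP)$, the It\^o product rule gives a drift of the form
\[
-(1-G^\vartheta_a)(1-G^\vartheta_j)\,dD_{\tilde L}+(1-G^\vartheta_j)(L-U_{\tilde L})\,dG^\vartheta_a+(1-G^\vartheta_a)(F-U_{\tilde L})\,dG^\vartheta_j.
\]
In $\cA$ one has $U_{\tilde L}=L$, so the first of the two $dG$ integrands vanishes and the other reduces to $(1-G^\vartheta_a)(F-L)\,dG^\vartheta_j$; the exact choice of $\lambda$ in (\ref{eq:eqlF>L}) together with Proposition~\ref{prop:dD_Lswitch} makes this term exactly cancel the $dD_{\tilde L}$ contribution. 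In $\cC$ one has $dG^\vartheta_j=0$, $dD_{\tilde L}=0$, and $L\leq U_{\tilde L}$, leaving a nonpositive drift $(L-U_{\tilde L})\,dG^\vartheta_a$. Hence $W$ is a supermartingale, and a martingale whenever $G^\vartheta_a$ does not stop in $\cC$.

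From $W(\vartheta)=U_{\tilde L}(\vartheta)=V_{\tilde L}(X_\vartheta,Y_\vartheta)$ and the identification of $W$ evaluated at $\tau_\cP\wedge\infty$ with firm $i$'s total payoff --- using $F_{\tau_\cP}=L_{\tau_\cP}=U_{\tilde L}(\tau_\cP)$ by the continuity of $L,F$ on $\partial\cP$ together with Proposition~\ref{prop:eqlL>F} governing the preemption outcome at $\tau_\cP$ --- optional sampling under the class-$(\mathrm{D})$ bound recorded in footnote~\ref{fn:XclassD} yields $V^\vartheta_i(G^\vartheta_a,\alpha^\vartheta_a,G^\vartheta_j,\alpha^\vartheta_j)\leq V_{\tilde L}(X_\vartheta,Y_\vartheta)$, with equality for the proposed strategy (which never stops in $\cC$). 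The main obstacle I anticipate is the careful jump bookkeeping at $\tau_\cP$: one must handle the outcome probabilities $\lambda^\vartheta_{L,i},\lambda^\vartheta_{L,j},\lambda^\vartheta_M$ delivered by the $\alpha$-extensions from Appendix~\ref{app:outcome} and verify that the expected terminal contribution equals $F_{\tau_\cP}$ irrespective of the random role assignment, as guaranteed by $L=F$ on $\partial\cP$.
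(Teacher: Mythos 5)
Your argument is correct and rests on the same two pillars as the paper's proof: the Doob--Meyer decomposition $U_{\tilde L}=M_{\tilde L}-D_{\tilde L}$ from Proposition~\ref{prop:dD_Lswitch}, and the algebraic fact that the prescribed rate makes $(F-L)\,dG^\vartheta_j=(1-G^\vartheta_j)\,dD_{\tilde L}$ (since the denominator in~\eqref{eq:eqlF>L} is $e^{rt}(F_t-L_t)$ and the numerator is $e^{rt}\,dD_{\tilde L}(t)$). Where you differ is in presentation: the paper rewrites firm $i$'s payoff as $E\bigl[\int_{[0,\tau_\cP]}S_i\,dG^\vartheta_i\bigv\F_\vartheta\bigr]$ with $S_i(s)=\int_{[0,s]}F\,dG^\vartheta_j+(1-G^\vartheta_j(s))L_s$, computes $dS_i=(1-G^\vartheta_j)(d\tilde L+dD_{\tilde L})$, and then argues that the factor $(1-G^\vartheta_j)$ acts as a benign constraint because $G^\vartheta_j$ only charges where stopping $\tilde L+D_{\tilde L}$ is already optimal. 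You instead build the cumulative-value supermartingale $W$ and show its drift vanishes in $\cA$ (by the cancellation above) and is nonpositive in $\cC$ (because $L\leq U_{\tilde L}$). Your $W$-process makes the treatment of the $(1-G^\vartheta_j)$ factor fully transparent, at the cost of more integration-by-parts bookkeeping; the two routes are otherwise equivalent, and your jump computation $\Delta W=\Delta G^\vartheta_a(1-G^\vartheta_j)(L-U_{\tilde L})\leq 0$ is exactly what is needed for the supermartingale property when $G^\vartheta_a$ jumps before $\tau_\cP$.

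One step you leave implicit is the treatment of deviations with $\alpha^\vartheta_a(t)>0$ at some $t<\tau_\cP$: your $W$ ignores $\alpha^\vartheta_a$ entirely, so the optional-sampling bound only covers deviations whose $\alpha$-component is off before $\tau_\cP$. The paper disposes of this by citing \cite{RiedelSteg14} for the fact that, because $G^\vartheta_j$ is continuous on $[\vartheta,\tau_\cP)$, turning on $\alpha^\vartheta_a$ there cannot improve the payoff (there is no mass of $G^\vartheta_j$ to coordinate against). You should make the same appeal, or argue directly that such an $\alpha^\vartheta_a$ is payoff-equivalent to a jump of $G^\vartheta_a$ to $1$. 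Your closing paragraph correctly identifies the terminal contribution, $(1-G^\vartheta_a(\tau_\cP-))(1-G^\vartheta_j(\tau_\cP-))F_{\tau_\cP}$, via $L_{\tau_\cP}=F_{\tau_\cP}=U_{\tilde L}(\tau_\cP)$ together with $\lambda^\vartheta_{L,i}+\lambda^\vartheta_{L,j}+\lambda^\vartheta_M=(1-G^\vartheta_i(\tau_\cP-))(1-G^\vartheta_j(\tau_\cP-))$; for the optional-sampling limit at $\tau_\cP-$ one should pass along $\tau_n\nearrow\tau_\cP$ with $\tau_n<\tau_\cP$ and use the class-(D) bound exactly as you sketch.
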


\begin{proof}
Fix $i\in\{1,2\}$ and let $j$ denote the other firm. The strategies satisfy the conditions of Definition \ref{def:alpha} (presupposing $G^\vartheta_i(t)=\alpha^\vartheta_i(t)=0$ on $[0,\vartheta)$). Note in particular that by our nondegeneracy assumption, $\tau_\cP=\inf\{t\geq\vartheta\mid L_t\geq F_t\}$ unless $(X_\vartheta,Y_\vartheta)=(X_0,Y_0)=(0,0)$ and $y_\cP=0$, and thus except for that excluded case the denominator in \eqref{eq:eqlF>L} will not vanish before $\tau_\cP$. Now $\bar y>Y_t\geq b(X_t)$ implies $t\geq\tau_\cP$ by $b(x)\geq\min(\bar y,y_\cP+x(r-\mu_Y)/(r-\mu_X))$ and hence the proposed stopping rate is also nonnegative.\footnote{%
Given the proposed rate, $G^\vartheta_i(t)=1-\exp\{-\int_\vartheta^t(1-G^\vartheta_i(s))^{-1}\,dG^\vartheta_i(s)\}$ on $[\vartheta,\tau_\cP)$.
}

The strategies are mutual best replies at $\tau_\cP$ by Proposition \ref{prop:eqlL>F}, i.e., given any value of $G^\vartheta_i(\tau_\cP-)$, $G^\vartheta_i(t)=1$ on $[\tau_\cP,\infty]$ and $\alpha^\vartheta_i(t)$ as proposed are optimal against $\bigl(G^\vartheta_j,\alpha^\vartheta_j\bigr)$ and the related payoff to firm $i$ is $E\bigl[(1-G^\vartheta_i(\tau_\cP-)\bigr)\bigl(1-G^\vartheta_j(\tau_\cP-)\bigr)F_{\tau_\cP}\bigv\F_\vartheta\bigr]$. It remains to show optimality of $G^\vartheta_i$ on $[\vartheta,\tau_\cP)$. Since $G^\vartheta_j$ is continuous on $[\vartheta,\tau_\cP)$ it follows from \cite{RiedelSteg14} that it is not necessary to consider the possibility that $\alpha^\vartheta_i(t)>0$.

Given the proposed optimal $\alpha^\vartheta_i$ and $G^\vartheta_i(t)=1$ on $[\tau_\cP,\infty]$, the related payoff at $\tau_\cP$ and the continuity of $G^\vartheta_j$ up to $\tau_\cP$, we can write the payoff from any $G^\vartheta_i$ on $[\vartheta,\tau_\cP)$ as
\begin{align*}%\label{Vi_mixed}
V^\vartheta_i\bigl(G^\vartheta_i,\alpha^\vartheta_i,G^\vartheta_j,\alpha^\vartheta_j\bigr)=E&\biggl[\int_{[0,\tau_\cP]}\biggl(\int_{[0,s)}F_t\,dG^\vartheta_j(t)+\bigl(1-G^\vartheta_j(s)\bigr)L_s\biggr)\,dG^\vartheta_i(s)\\
&+(1-G^\vartheta_i(\tau_\cP-)\bigr)\bigl(1-G^\vartheta_j(\tau_\cP-)\bigr)F_{\tau_\cP}\biggv\F_\vartheta\biggr].
\end{align*}
Further, noting $(1-G^\vartheta_i(\tau_\cP-)\bigr)\bigl(1-G^\vartheta_j(\tau_\cP-)\bigr)F_{\tau_\cP}=\Delta G^\vartheta_i(\tau_\cP)\Delta G^\vartheta_j(\tau_\cP)F_{\tau_\cP}$ and $\Delta G^\vartheta_j(t)=0$ on $[0,\tau_\cP)$, we obtain that
\begin{align*}%\label{Vi_mixed}
V^\vartheta_i\bigl(G^\vartheta_i,\alpha^\vartheta_i,G^\vartheta_j,\alpha^\vartheta_j\bigr)=E&\biggl[\int_{[0,\tau_\cP]}S_i(s)\,dG^\vartheta_i(s)\biggv\F_\vartheta\biggr]
\end{align*}
where $S_i(s)=\int_{[0,s]}F_t\,dG^\vartheta_j(t)+\bigl(1-G^\vartheta_j(s)\bigr)L_s$. Thus $G^\vartheta_i$ is optimal iff it increases when it is optimal to stop $S_i$. We now show that this is the case anywhere in the attrition region $\cA$ or at $\tau_\cP$.

The process $S_i$ is continuous on $\{\vartheta<\tau_\cP\}$ because there $L_{\tau_\cP}=F_{\tau_\cP}$ implies $\Delta S_i(\tau_\cP)=\Delta G^\vartheta_j(\tau_\cP)\bigl(F_{\tau_\cP}-L_{\tau_\cP}\bigr)=0$. Since $L$ is a continuous semimartingale and $G^\vartheta_j$ is monotone and continuous on $[0,\tau_\cP)$, $S_i$ satisfies
\begin{equation*}
dS_i(s)=(F_s-L_s)\,dG^\vartheta_j(s)+\bigl(1-G^\vartheta_j(s)\bigr)\,dL_s.
\end{equation*}
Using $(F_s-L_s)dG^\vartheta_j(s)=\bigl(1-G^\vartheta_j(s)\bigr)\indi{Y_s\geq b(X_s)}e^{-rs}(Y_s-\bar y)\,ds$ on $[\vartheta,\tau_\cP)$ yields
\begin{equation*}
dS_i(s)=\bigl(1-G^\vartheta_j(s)\bigr)\bigl(dL_s+\indi{Y_s\geq b(X_s)}e^{-rs}(Y_s-\bar y)\,ds\bigr).
\end{equation*}
Recalling $\tilde L$ from Subsection \ref{subsec:stop} (with $\tau_\cP$ adjusted for $\vartheta$), and its Snell envelope $U_{\tilde L}$ and compensator $D_{\tilde L}$ identified in Proposition \ref{prop:dD_Lswitch}, we have in fact
\begin{equation*}
dS_i(s)=\bigl(1-G^\vartheta_j(s)\bigr)\bigl(d\tilde L_s+dD_{\tilde L}(s)\bigr)
\end{equation*}
on $[\vartheta,\tau_\cP]$ where $\vartheta<\tau_\cP$. We first ignore $(1-G^\vartheta_j)$ and argue that it is optimal to stop $\tilde L+D_{\tilde L}$ anywhere in the attrition region $\cA$ or at $\tau_\cP$. Indeed, as $U_{\tilde L}=M_{\tilde L}-D_{\tilde L}$ is the smallest supermartingale dominating $\tilde L$, $U_{\tilde L}+D_{\tilde L}=M_{\tilde L}$ must be the smallest supermartingale dominating $\tilde L+D_{\tilde L}$, i.e., the Snell envelope of $\tilde L+D_{\tilde L}$. As $M_{\tilde L}$ is a martingale, any stopping time $\tau$ with $M_{\tilde L}(\tau)=\tilde L_\tau+D_{\tilde L}(\tau)\Leftrightarrow U_{\tilde L}(\tau)=\tilde L_\tau$ a.s.\ is optimal for $\tilde L+D_{\tilde L}$, which is satisfied where the proposed $G^\vartheta_i$ increases (i.e., in $\cA$ or at $\tau_\cP$).\footnote{%
By Proposition \ref{prop:b(x)}, $\cA\cap\cC$ could only be reached at the origin, i.e., by $(X_\tau,Y_\tau)=(0,0)\Leftrightarrow(X_0,Y_0)=(0,0)$ and if $y_\cP=0$ (cf.\ footnote \ref{fn:A=Cc}), which we have excluded. $U_{\tilde L}(\tau_\cP)=\tilde L_{\tau_\cP}$ holds by definition.
}
Instead of stopping $\tilde L+D_{\tilde L}$, we actually have to consider, for any $\tau\in[\vartheta,\tau_\cP]$,
\begin{align*}
S_i(\tau)-S_i(\vartheta)&=\int_{[\vartheta,\tau]}\bigl(1-G^\vartheta_j(s)\bigr)\bigl(d\tilde L_s+dD_{\tilde L}(s)\bigr)\\
&=\bigl(1-G^\vartheta_j(\tau)\bigr)\bigl(\tilde L_\tau+D_{\tilde L}(\tau)\bigr)-\tilde L_\vartheta-D_{\tilde L}(\vartheta)\\
&+\int_{[\vartheta,\tau]}\bigl(\tilde L_s+D_{\tilde L}(s)\bigr)\,dG^\vartheta_j(s).
\end{align*}
The presence of $(1-G^\vartheta_j)$ thus acts like a constraint compared to stopping $\tilde L+D_{\tilde L}$. However, $G^\vartheta_j$ places its entire mass favorably, as it also only increases when it is optimal to stop $\tilde L+D_{\tilde L}$. Therefore the payoff from optimally stopping $S_i$ is indeed that from $\tilde L+D_{\tilde L}$, i.e., $M_{\tilde L}(\vartheta)=U_{\tilde L}(\vartheta)$ (with $D_{\tilde L}(\vartheta)=0$), and the same stopping times are optimal. The proof is now complete.
\end{proof}

The Markovian equilibrium stopping rate \eqref{eq:eqlF>L} is the (absolute value of the) drift of $L$ in $\cA$ divided by the difference $F-L$, ensuring the exact compensation that makes each firm indifferent as discussed before. Waiting is strictly optimal if $(X,Y)\in\cC$, where the equilibrium payoff is $U_{\tilde L}>L$.

Although the stopping rate \eqref{eq:eqlF>L} becomes unboundedly large as the state approaches the preemption region~-- because the denominator $F-L$ vanishes while the numerator is bounded away from zero~-- the cumulative stopping probability does \emph{not} converge to 1. There will indeed be some mass left when reaching the preemption boundary. This is a distinctive feature of our stochastic model\footnote{%
The mathematical question underlying Proposition \ref{prop:DG>0} is of interest in its own right. With the same arguments used in the proof (which then become much simpler) one can show that for a Brownian motion $B$ started at $a>0$ one has
\begin{equation*}
\int_0^{\tau_0}\frac{1}{B_t}\,dt<\infty
\end{equation*}
a.s., where $\tau_0=\inf\{t\geq 0\mid B_t\leq 0\}$. We actually consider the reciprocal of the process $Z_t=X_t-Y_t+a$ with our geometric Brownian motions $X$, $Y$.
}
that is not observed in deterministic versions, cf.\ Section \ref{sec:dtm}.

\begin{proposition}\label{prop:DG>0}
The strategies $G^\vartheta_i$ specified in Proposition \ref{prop:eqlF>L} satisfy $\Delta G^\vartheta_i(\tau_\cP)>0$ on $\{\tau_\cP<\infty\}$ a.s.
\end{proposition}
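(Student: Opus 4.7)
The stopping rate in \eqref{eq:eqlF>L} yields the explicit representation
$$1-G^\vartheta_i(\tau_\cP-)=\exp\biggl\{-\int_\vartheta^{\tau_\cP}\frac{\indi{Y_s\geq b(X_s)}(Y_s-\bar y)}{Z_s}\,ds\biggr\},$$
where $Z_s:=X_s/(r-\mu_X)-(Y_s-y_\cP)/(r-\mu_Y)$. A direct computation (as already used in the proof of Proposition \ref{prop:eqlF>L}) gives $F_s-L_s=e^{-rs}Z_s$, so $Z$ is a continuous semimartingale with $Z_s>0$ on $[\vartheta,\tau_\cP)$ and $Z_{\tau_\cP}=0$. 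Hence $\Delta G^\vartheta_i(\tau_\cP)=1-G^\vartheta_i(\tau_\cP-)>0$ on $\{\tau_\cP<\infty\}$ amounts to a.s.\ finiteness of the integral in the exponent. Its nonnegative numerator is bounded on the compact random interval $[\vartheta,\tau_\cP]$ by continuity of $Y$, so the task reduces to
$$\int_\vartheta^{\tau_\cP}\frac{ds}{Z_s}<\infty\quad\text{a.s.\ on}\quad\{\tau_\cP<\infty\}.$$

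\textbf{An It\^o trick.}
By It\^o, $Z$ has locally bounded drift $dA_s=[\mu_X X_s/(r-\mu_X)-\mu_Y Y_s/(r-\mu_Y)]\,ds$ and quadratic-variation density
$$\sigma_Z^2(s)=\Bigl(\tfrac{\sigma_X X_s}{r-\mu_X}\Bigr)^2-2\rho\tfrac{\sigma_X X_s}{r-\mu_X}\tfrac{\sigma_Y Y_s}{r-\mu_Y}+\Bigl(\tfrac{\sigma_Y Y_s}{r-\mu_Y}\Bigr)^2,$$
bounded below by $(1-\rho^2)\max\{(\sigma_X X_s/(r-\mu_X))^2,(\sigma_Y Y_s/(r-\mu_Y))^2\}$. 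Since $\abs\rho<1$ and the hypothesis keeps at least one of the GBMs $X,Y$ strictly positive throughout, and since $(X_{\tau_\cP},Y_{\tau_\cP})$ on the preemption boundary also has a positive coordinate, continuity provides (on $\{\tau_\cP<\infty\}$) a random $\delta>0$ and constants $0<c\leq C$ with $c\leq\sigma_Z^2(s)\leq C$ on $[\tau_\cP-\delta,\tau_\cP]$. Apply It\^o to $f(z):=z\log z-z$ (extended by $f(0)=0$ and $C^2$ on $(0,\infty)$ with $f''(z)=1/z$) to obtain, for $t\in[\vartheta,\tau_\cP)$,
$$f(Z_t)-f(Z_\vartheta)=\int_\vartheta^t\log(Z_s)\,dZ_s+\tfrac12\int_\vartheta^t\frac{d\langle Z\rangle_s}{Z_s}.$$
As $t\uparrow\tau_\cP$, $f(Z_t)\to 0$. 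The monotone integral on the right admits a limit in $[0,\infty]$, a.s.\ finite provided the It\^o integral $\int_\vartheta^{\tau_\cP-}\log(Z_s)\,dZ_s$ converges a.s.; combined with $d\langle Z\rangle_s\geq c\,ds$ near $\tau_\cP$, this would yield $\int_\vartheta^{\tau_\cP}ds/Z_s<\infty$ a.s.

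\textbf{The key estimate (main obstacle).}
It remains to show that $\int_\vartheta^{\tau_\cP-}\log(Z_s)\,dZ_s$ converges a.s. Writing $dZ=dM+dA$, this is equivalent to $\int_\vartheta^{\tau_\cP}\abs{\log Z_s}\,ds<\infty$ (finiteness of the drift contribution, as $\abs{dA_s/ds}$ is bounded on $[\vartheta,\tau_\cP]$) together with $\int_\vartheta^{\tau_\cP}(\log Z_s)^2\,ds<\infty$ (making the martingale contribution a local martingale of finite quadratic variation, hence a.s.\ convergent). This is the substantive content of the paper's footnote about Brownian motion applied to $Z$: use Dambis--Dubins--Schwarz to represent the martingale part of $Z$ as a time-changed Brownian motion along the a.s.\ bi-Lipschitz clock $u=\int_\vartheta^\cdot\sigma_Z^2\,dv$, and Girsanov (localized up to $\tau_\cP$) to absorb the bounded drift under an equivalent measure. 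Both integrability statements then reduce to their Brownian analogues $\int_0^{\tau_0}\abs{\log\tilde B_u}\,du<\infty$ and $\int_0^{\tau_0}(\log\tilde B_u)^2\,du<\infty$, for $\tilde B$ a Brownian motion starting at $Z_\vartheta>0$ stopped at $\tau_0=\inf\{u:\tilde B_u=0\}$. These in turn follow from the occupation-time formula together with the Ray--Knight theorem: $(L^x_{\tau_0}(\tilde B))_{x\in[0,Z_\vartheta]}$ is a squared Bessel process of dimension~2 starting at $0$ at $x=0$, so $L^x_{\tau_0}(\tilde B)=O(x)$ a.s.\ as $x\downarrow 0$, which makes $\int_0^{\epsilon}(\log x)^2\,L^x_{\tau_0}\,dx$ and $\int_0^\epsilon\abs{\log x}\,L^x_{\tau_0}\,dx$ convergent. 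The principal technical difficulty is precisely this quantitative control of the local time of $Z$ (or its Brownian reduction $\tilde B$) at $0$ near the first zero; the footnote's observation that the argument becomes ``much simpler'' for Brownian motion reflects that for $\tilde B$ no time change and Girsanov step is needed.
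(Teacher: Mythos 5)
Your proposal shares the paper's central Itô step (apply Itô to $f(z)=z\log z-z$ to link $\int ds/Z_s$, $f(Z)$, and $\int\log(Z_s)\,dZ_s$) and the same reduction to the process $Z=F-L$ (up to discounting), but beyond that it takes a genuinely different route. The paper works entirely in $L^1$: it first applies Itô to the primitive $F(\cdot)$ with $F''=\log$ to establish $E[\int_0^{\tau_0\wedge T}\log(Z_t)\,dt]\in\R$, and with that in hand it takes expectations in the Itô identity for $f(Z)$, eliminates the stochastic integral by $L^2$ bounds, and concludes $E[\int_0^{\tau_0\wedge T}Z_t^{-1}\,dt]<\infty$, hence the a.s.\ statement. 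You instead argue pathwise: you need the Itô integral $\int_\vartheta^{\tau_\cP-}\log(Z_s)\,dZ_s$ to converge a.s., for which you reduce to $\int|\log Z|\,ds<\infty$ and $\int(\log Z)^2\,ds<\infty$ a.s., established via a Dambis--Dubins--Schwarz time change, Girsanov, occupation-time formula, and the Ray--Knight theorem. Both routes are viable; yours avoids the auxiliary ``second-order'' Itô and the careful integrability bookkeeping, at the cost of importing heavier machinery.

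Two technical remarks. First, the claim $L^x_{\tau_0}(\tilde B)=O(x)$ a.s.\ as $x\downarrow0$ is not correct: Ray--Knight gives that $(L^x_{\tau_0})_{x\in[0,Z_\vartheta]}$ is a squared Bessel process of dimension~2 started at~0, and by the law of the iterated logarithm such a process is $O(x\log\log(1/x))$ but not $O(x)$ a.s. This does not break your argument, since $x|\log x|^p\log\log(1/x)$ is still integrable near~$0$, but the stated rate should be corrected. Second, the appeal to Ray--Knight is in fact unnecessary for your purposes: on $\{\tau_0<\infty\}$ the Brownian local time profile $x\mapsto L^x_{\tau_0}$ is a.s.\ continuous, bounded, and compactly supported, so $\int_0^\infty|\log x|^p\,L^x_{\tau_0}\,dx<\infty$ already follows from $\int_0^1|\log x|^p\,dx<\infty$, without any quantitative control of the decay of $L^x_{\tau_0}$ near~$0$. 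Using this observation would streamline your ``key estimate'' step considerably, needing only DDS (to realise the time-changed martingale part as Brownian motion), boundedness of the drift-to-volatility ratio on $[\vartheta,\tau_\cP]$ (to justify the localized Girsanov step), and the occupation-time formula.
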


\noindent
{\it Proof:} See appendix.
\medskip

Concerning the regularity of $\alpha_i$, note that we can only have $L\leq M$ on $\{Y\leq y_\cP\}$ by our assumption $(c_0-c_A)/r+I\geq 0$, whence $L>M$ on $\{(X,Y)\in\cP\cup\partial\cP\}$ a.s.\ if $X_0>0$; so $\alpha_i$ will be continuous and vanish on $\{(X,Y)\in\partial\cP\}$.\footnote{%
In the particular case $X_0=0$, $X\equiv 0$ and
\begin{equation*}
\alpha_i(t)=\frac{L_t-F_t}{L_t-M_t}\indi{L_t>F_t}=\frac{(Y_t-y_\cP)\indi{Y_t>y_\cP}}{Y_t-y_\cP+(r-\mu_Y)(I-(c_A-c_0)/r)}.
\end{equation*}
Then $\alpha_i$ will still be continuous for $I>(c_A-c_0)/r$ and unity on $\{Y\geq y_\cP\}$ for $I=(c_A-c_0)/r$.
}

We can easily aggregate the equilibria to a subgame perfect equilibrium since all quantities are Markovian.

\begin{theorem}\label{thm:eql}
The strategies $\bigl(G^\vartheta_i,\alpha^\vartheta_i\bigr)_{\vartheta\in\T}$, $i=1,2$ of Proposition \ref{prop:eqlF>L} form a subgame perfect equilibrium.
\end{theorem}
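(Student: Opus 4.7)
The theorem requires verifying two things from Definitions~\ref{def:TC_extended} and~\ref{def:SPE_extended}: (a) the family $\bigl(G^\vartheta_i,\alpha^\vartheta_i\bigr)_{\vartheta\in\T}$ is time-consistent, and (b) each pair $\bigl(G^\vartheta_1,\alpha^\vartheta_1\bigr)$, $\bigl(G^\vartheta_2,\alpha^\vartheta_2\bigr)$ is an equilibrium in the subgame at $\vartheta$. Part (b) is already settled: Proposition~\ref{prop:eqlF>L} works for any $\vartheta\in\T$, and it automatically subsumes the case $L_\vartheta>F_\vartheta$ (i.e.\ starting in $\cP$) because then $\tau_\cP=\vartheta$ a.s.\ and the strategies reduce to those of Proposition~\ref{prop:eqlL>F}. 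So the plan reduces to checking time consistency.

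Time consistency of $\alpha^\vartheta_i$ is immediate: the formula $\alpha^\vartheta_i(t)=\indi{L_t>F_t}(L_t-F_t)/(L_t-M_t)$ depends only on the state at time $t$ and not on the index $\vartheta$, so $\alpha^\vartheta_i(\tau)=\alpha^{\vartheta'}_i(\tau)$ holds trivially on $\{\tau\geq\vartheta'\}$ for $\vartheta\leq\vartheta'\leq\tau$.

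For $G^\vartheta_i$ I would use the exponential representation already noted in the proof of Proposition~\ref{prop:eqlF>L}: setting
\begin{equation*}
h_s:=\frac{\indi{Y_s\geq b(X_s)}(Y_s-\bar y)}{X_s/(r-\mu_X)-(Y_s-y_\cP)/(r-\mu_Y)},
\end{equation*}
we have $1-G^\vartheta_i(t)=\exp\bigl(-\int_\vartheta^t h_s\,ds\bigr)$ on $[\vartheta,\tau_\cP(\vartheta))$ and $G^\vartheta_i\equiv 1$ on $[\tau_\cP(\vartheta),\infty]$, and likewise starting at $\vartheta'$. Note that $h_s$ is the same Markovian functional of the state for every subgame and that $\tau_\cP(\vartheta')=\tau_\cP(\vartheta)$ on $\{\vartheta'\leq\tau_\cP(\vartheta)\}$ while on $\{\vartheta'>\tau_\cP(\vartheta)\}$ we have $G^\vartheta_i(\vartheta'-)=1$. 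For $\vartheta\leq\vartheta'\leq t$ with $t<\tau_\cP(\vartheta)$ the multiplicative property of the exponential gives
\begin{equation*}
1-G^\vartheta_i(t)=\exp\!\Bigl(-\!\!\int_\vartheta^{\vartheta'}\!h_s\,ds\Bigr)\exp\!\Bigl(-\!\!\int_{\vartheta'}^t\!h_s\,ds\Bigr)=\bigl(1-G^\vartheta_i(\vartheta'-)\bigr)\bigl(1-G^{\vartheta'}_i(t)\bigr),
\end{equation*}
which rearranges to the identity in Definition~\ref{def:TC_extended}. On $\{t\geq\tau_\cP(\vartheta)\}$ both sides equal $1$: either $\vartheta'\leq\tau_\cP(\vartheta)$ so $\tau_\cP(\vartheta')=\tau_\cP(\vartheta)\leq t$ and $G^{\vartheta'}_i(t)=1$, or $\vartheta'>\tau_\cP(\vartheta)$ and $G^\vartheta_i(\vartheta'-)=1$.

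I do not expect a genuine obstacle here; the argument is essentially the observation that Markovian hazard rates plus the exponential functional yield a time-consistent family. The only points requiring a little care are the boundary behaviour at $\tau_\cP$ (handled by the right-continuous jump to $1$) and the excluded degenerate case $(X_0,Y_0)=(0,0)$ with $y_\cP=0$, which is ruled out by the hypothesis inherited from Proposition~\ref{prop:eqlF>L}.
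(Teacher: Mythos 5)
Your proof is correct and makes explicit exactly the observation the paper relies on: the paper's entire justification is the one-line remark that the equilibria ``aggregate easily since all quantities are Markovian,'' with subgame-equilibrium already given by Proposition~\ref{prop:eqlF>L} and time consistency following from the Markovian hazard rate together with the exponential survival representation and the agreement $\tau_\cP(\vartheta')=\tau_\cP(\vartheta)$ on $\{\vartheta'\leq\tau_\cP(\vartheta)\}$. Your verification fills in precisely these details, so it is the same approach, just written out.
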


\begin{remark}\label{rem:Y=0}
If $X_0=0<Y_0$, the game is played ``on the \nbd{y}axis'' and the derived equilibria are as follows. There is preemption as soon as $Y$ exceeds the preemption point $y_\cP$. Any $y$ less than $b(0)=\min(y_\cP,y^*)$ is in the continuation region, so there is attrition iff $(y^*)^+<y_\cP$, between those points.

If $Y_0=0$, the game is played ``on the \nbd{x}axis'' and is actually deterministic since we then have $\tau_\cP=\infty$ by our assumption $y_\cP\geq 0$. In fact, now $F_t>L_t\:\forall t\in\R_+$ if $X_0\vee y_\cP>0$, and $dL_t=e^{-rt}\bar y\,dt$. Thus, if $\bar y\geq 0$, stopping is dominated for $t\in\R_+$, strictly if $\bar y>0$ or against any strategy that charges $[0,\infty)$, while the proposed equilibrium strategies only charge $[\infty]$.\footnote{%
Indeed, if $\bar y=y^*=0$, $b\equiv y^*=0$, while if $\bar y>0\Leftrightarrow y^*>0$, then $b(X_t)>0$ for $X_0+y_\cP>0$ by $b(x)\geq\min(\bar y,y_\cP+x(r-\mu_Y)/(r-\mu_X))$. 
}
If $\bar y<0$, also $b\equiv y^*<0$, and there is permanent attrition with rate $-dL/(F-L)$.

In the completely degenerate case that further $X_0=y_\cP=0$, the stopping rate of Proposition \ref{prop:eqlF>L} is not well defined. Now, however, $L\equiv F$ and there are the following equilibria: for $\bar y>0$, both firms wait indefinitely, for $\bar y=0$, any strategy pair with no joint mass points are an equilibrium, and for $\bar y<0$, one firm switches immediately and the other abstains.
\end{remark}

\section{A Numerical Example}\label{sec:example}

As an illustration of the model we present a numerical example to show the consequences of the results presented so far. The parameter values of our base case are given in Table~\ref{tab:values}.
\begin{table}[h]
\centering
\begin{tabular}{ccccc}
  \toprule
  $c_0=3.5$ & $c_A=4.5$ & $c_B=5$ & $I=100$ & $r=.1$ \\
  $\mu_Y=.04$ & $\mu_X=.04$ & $\sigma_Y=.25$ & $\sigma_X=.25$ & $\rho=.4$\\
  \bottomrule
\end{tabular}
\caption{Parameter values for a numerical example}
\label{tab:values}
% Created with HTAposterior.m
\end{table}

For this case it holds that $y^{\ast}=17.45$. The preemption region is depicted in Figure~\ref{fig:preempregion}. Finding the boundary $x\mapsto b(x)$ as characterized in Proposition~\ref{prop:b(x)} is not trivial as can be seen from Remark~\ref{rem:b(x)}. However, it is known from the literature on numerical methods for pricing American options (see, for example, \citealp{Glass04} for an overview) that a piecewise linear or exponential approximation often works well. In our case, we hypothesize an exponential boundary of the form
\begin{equation*}
  b(x) = y^{\ast}-(y^{\ast}-\bar{y})e^{-\gamma(x-\bar{x})},
\end{equation*}
where $\bar{x}$ is the unique such that $(\bar{x},\bar{y})$ is on the boundary of $\cP$, and $\gamma$ is a free parameter.

A rough implementation now consists of hypothesizing a value for $\gamma$, followed by simulating 3,000 sample paths for several starting values close to this boundary. The optimal boundary should be such that simulated continuation values are uniformly slightly larger than the leader value. Using a manual search across several values for $\gamma$ suggests a value of .0984 and a boundary as depicted in Figure~\ref{fig:ConstrBd}.

\begin{figure}[htb]
  \centering
  \subfloat[Preemption region]{\includegraphics[height=6cm,width=5cm]{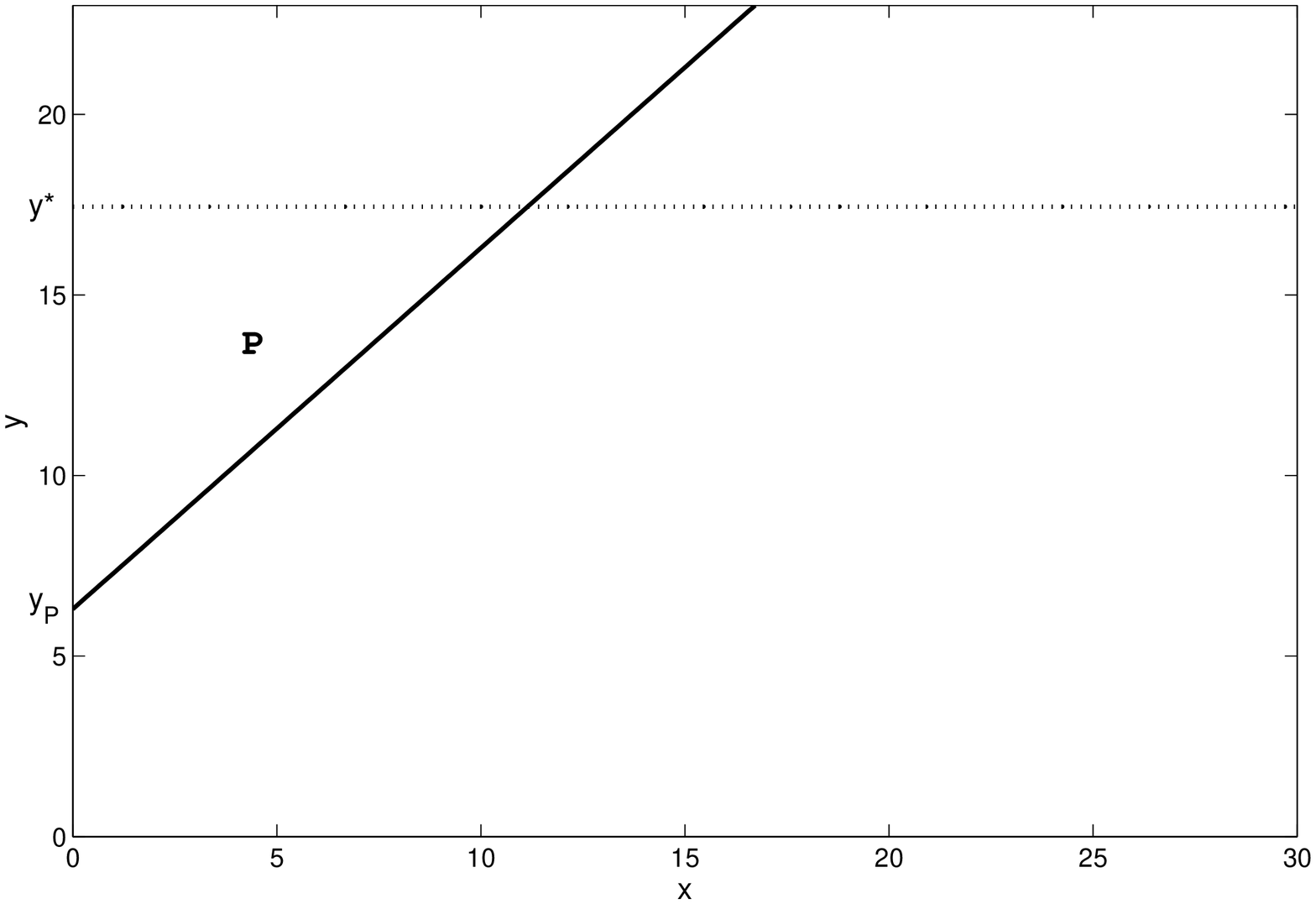}\label{fig:preempregion}}\qquad
  \subfloat[Optimal stopping boundary for constrained leader value]{\includegraphics[height=6cm,width=5cm]{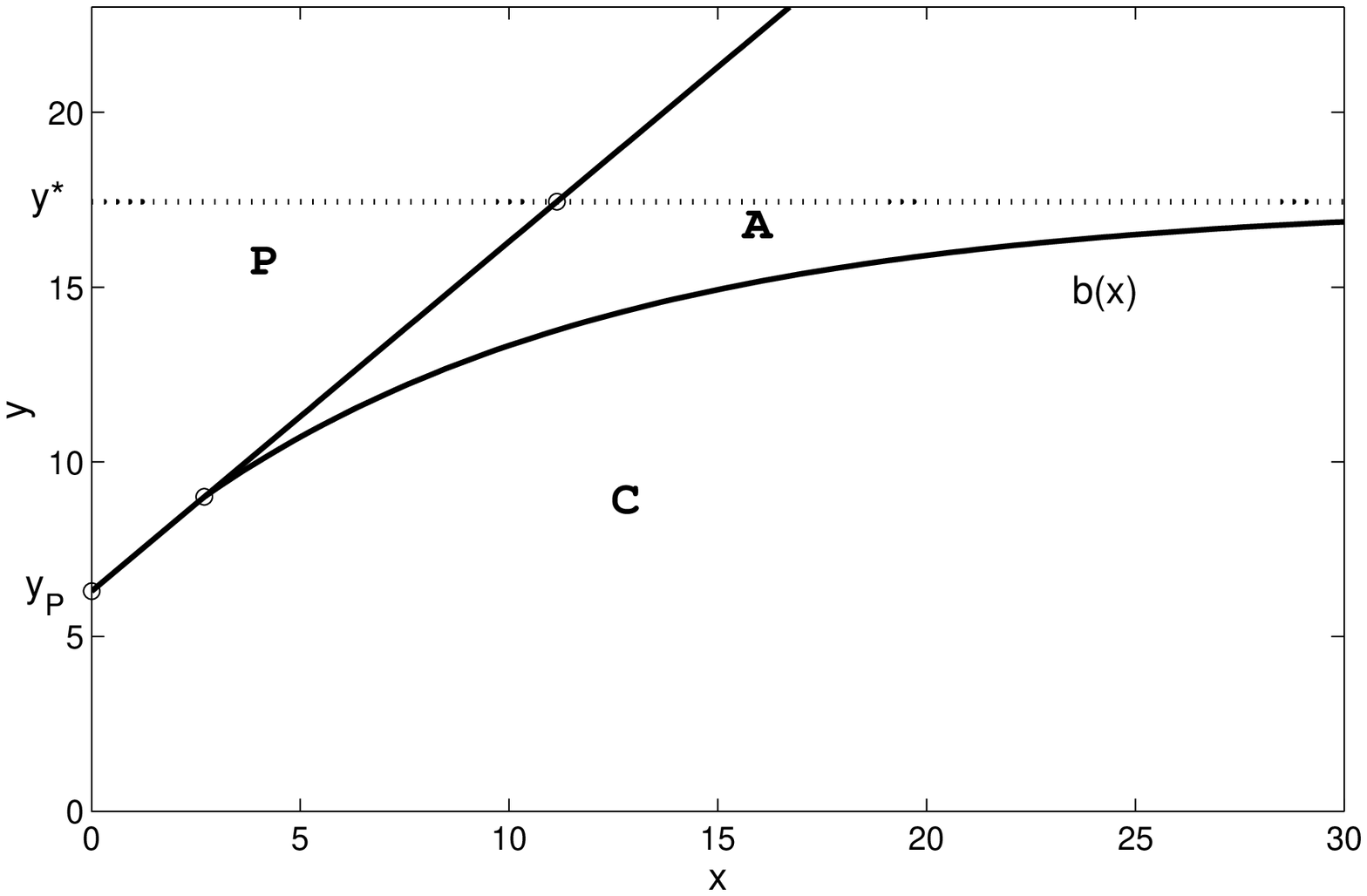}\label{fig:ConstrBd}}
  \caption{Preemption, attrition and continuation regions for numerical example.}
  % Created with coffeeExp.m
\end{figure}

Once the boundary is established we can simulate equilibrium sample paths using the equilibrium strategies from Theorem~\ref{thm:eql}. Two sample paths, together with their realizations of the attrition rate~\eqref{eq:eqlF>L} are given in Figures~\ref{fig:Example1} and~\ref{fig:Example2}. In parallel to the paths we sample the investment decisions from the respective attrition rates. The displayed paths end where investment takes place.
\begin{figure}[htb]
  \centering
  \includegraphics[height=6cm,width=12cm]{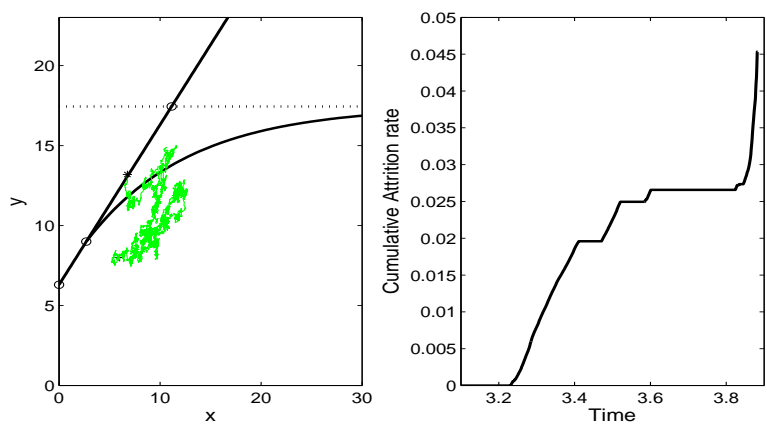}
  \caption{Sample path ending in preemption and its realized attrition rate.}
  \label{fig:Example1}
  % Created with coffeeExp.m
\end{figure}

\begin{figure}[htb]
  \centering
  \includegraphics[height=6cm,width=12cm]{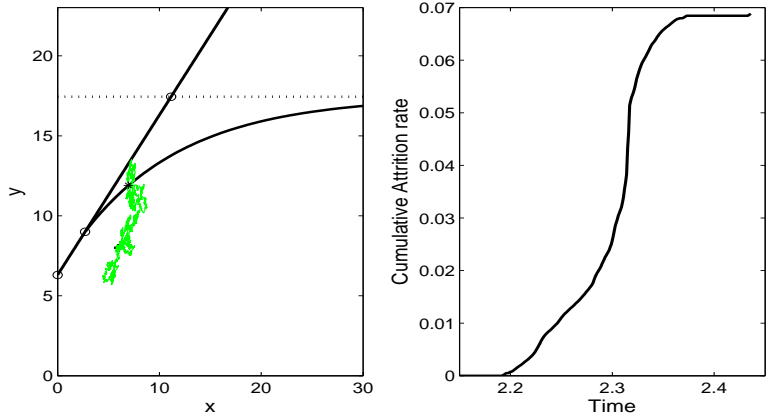}
  \caption{Sample path ending in attrition region and its realized attrition rate.}
  \label{fig:Example2}
  % Created with coffeeExp.m
\end{figure}

Note that, even though the two sample paths start at exactly the same point $(X_0,Y_0)=(6,8)$, their equilibrium realizations are very different. In the sample path $t\mapsto(X_t(\omega_1),Y_t(\omega_1))$ in Figure~\ref{fig:Example1} the attrition region is entered and exited several times before, finally, the sample path hits the preemption boundary. At that time the realized attrition rate has gone up to about $G^0_j(\tau_\cP(\omega_1)-)\approx .045$, $j=1,2$. However, no firm has invested up to time $\tau_\cP(\omega_1)$. At time $\tau_\cP(\omega_1)$ we have $(G^{\tau_\cP(\omega_1)}_j(\tau_\cP(\omega_1)),\alpha^{\tau_\cP(\omega_1)}_j(\tau_\cP(\omega_1)))=(1,0)$, $j=1,2$, so that (see Appendix~\ref{app:outcome}) each firm is the first to invest at time $\tau_\cP(\omega_1)$ with probability $\lambda^{\tau_\cP(\omega_1)}_{L,i}=.5$. So, the $\omega_1$ sample path ends with investment taking place when there is, just, a first-mover advantage. In the sample path $t\mapsto(X_t(\omega_2),Y_t(\omega_2))$ in Figure~\ref{fig:Example2} the attrition region is again entered and exited several times. Note that the first entry of the attrition region is at roughly the same time as along the $\omega_1$ sample path. At time 2.5 (approximately), when the attrition rate realization has gone up to $G^0_j(1.4)\approx .07$, $j=1,2$, one of the two firms actually invests, thereby handing the other firm the second-mover advantage.

Finally, we run a simulation of 3,000 sample paths, all starting at $(X_0,Y_0)=(6,8)$, and record whether the game stops in the attrition or preemption region. We find that 80\% of sample paths end with preemption, whereas 20\% of sample paths end in the attrition region. A scatter diagram of values $(X_{\tau(\omega)},Y_{\tau(\omega)})$ at the time of first investment is given in Figure~\ref{fig:scatter}.\footnote{This figure excludes outliers in both dimensions.} The average fraction of time spent in the attrition region is 6.76\% and all of sample paths experience attrition before the time of first investment. The value of immediate investment is -16.67, whereas the (simulated) value of the equilibrium strategies is 12.02.
\begin{figure}[htb]
  \centering
  \includegraphics[height=7cm,width=7cm]{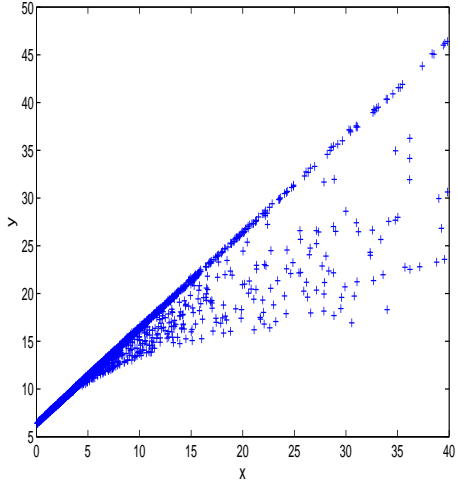}
  \caption{Scatter plot of realizations of $(X_{\tau^{\ast}},Y_{\tau^{\ast}})$.}
  \label{fig:scatter}
  % Created with coffeePLpath.m
\end{figure}

\section{Differences between Deterministic and Stochastic Models}\label{sec:dtm}

For comparative purposes, we now briefly analyse a deterministic version of our model, i.e., setting $\sigma_X=\sigma_Y=0$. The expressions for $L_t$, $F_t$ and $M_t$ provided in \eqref{L(y)} then remain the same. In order to avoid the distinction of many different cases, we concentrate on the most ``typical'' one representing the economic story that we have in mind. Therefore assume that the market $B$ is \emph{growing} at the rate $\mu_Y>0$ from the initial value $Y_0>0$. Further assume that $\mu_Y\geq\mu_X$, so that it is only market $B$ that can possibly overtake the other, $A$, in profitability.\footnote{%
If $\mu_X>\mu_Y$ and $X_0>0$, the behavior differs from that discussed in the main text as follows. Now the profitability of market $A$ will eventually outgrow that of market $B$ and preemption occurs during a bounded time interval (if at all). Indeed, since
\begin{equation}\label{preemp_dtm2}
\frac{\partial(L_t-F_t)}{\partial t}=e^{\mu_Xt}\biggl(\frac{\mu_YY_0}{r-\mu_Y}e^{(\mu_Y-\mu_X)t}-\frac{\mu_XX_0}{r-\mu_X}\biggr)
\end{equation}
has at most one zero and is negative for $t$ sufficiently large, $(L_t-F_t)$ grows until its maximum and then declines, so $\{t\mid L_t-F_t>0\}$ forms an interval (if nonempty). Arranging $(L_t-F_t)$ similarly as \eqref{preemp_dtm2} shows that $(L_t-F_t)$ is negative for $t$ sufficiently large, which bounds the previous interval.

Supposing that the preemption interval is nonempty and that $\mu_Y>0$, the unique maximum of $L_t$ attained at $t=\bar t$ may occur before, during or after preemption. We can observe attrition followed by preemption only in the first case, i.e., if $L_{\bar t}<F_{\bar t}$ and $\partial(L_{\bar t}-F_{\bar t})>0$, which is if $X_{\bar t}/(r-\mu_X)$ (see the right of \eqref{X_tbar}) lies in $\bigl(\bar y/(r-\mu_Y)-(c_B-c_A)/r-I,\mu_Y\bar y/(\mu_X(r-\mu_Y))\bigr)$. This interval is nonempty, e.g., if $\mu_X$ and $\mu_Y$ are sufficiently close.

In subgames after the preemption interval there will be indefinite attrition for all $t\geq\bar t$.
}
The same holds then for the functions $L$ and $F$: Now we have preemption due to $L_t>F_t$ iff $(X_t,Y_t)\in\cP$, which is iff
\begin{equation}\label{preemp_dtm}
e^{\mu_Yt}\biggl(\frac{Y_0}{r-\mu_Y}-\frac{X_0}{r-\mu_X}e^{(\mu_X-\mu_Y)t}\biggr)>\frac{c_B-c_A}{r}+I.
\end{equation}
The LHS is the product of two nondecreasing functions and the RHS is nonnegative by our assumption made in Section \ref{sec:model}. Hence $L_t>F_t$ for all
\begin{equation*}
t>\inf\{t\geq 0\mid L_t>F_t\}=:t_\cP
\end{equation*}
and preemption will indeed occur iff the LHS of \eqref{preemp_dtm} ever becomes positive, i.e., $t_\cP<\infty\,\Leftrightarrow\,\mu_Y>\mu_X$ or $Y_0/(r-\mu_Y)>X_0/(r-\mu_X)$. Preemption does not start immediately at $t=0$ iff  $Y_0/(r-\mu_Y)-X_0/(r-\mu_X)<(c_B-c_A)/r+I$, i.e., $t_\cP>0\,\Leftrightarrow\,(X_0,Y_0)\not\in\overline\cP$.

Now the constrained stopping problem to maximize $L_t$ over $t\leq t_\cP$ is very easy. As $L_t$ is increasing iff $Y_t<\bar y$, i.e., iff $Y_0<\bar y$ and
\begin{equation*}
t<\frac{1}{\mu_Y}\ln\biggl(\frac{\bar y}{Y_0}\biggr)=:\bar t,
\end{equation*}
the solution is to stop at the unconstrained optimum $t=\bar t$ (resp.\ at $t=\bar t:=0$ if $Y_0\geq\bar y$) or at the constraint $t=t_\cP$ if it is binding. The solution is also unique since $L_t$ in fact strictly increasing for $Y_t<\bar y$ and strictly decreasing for $Y_t\geq\bar y$.

As a consequence, stopping is strictly dominated before the constrained optimum $\min(\bar t,t_\cP)$ is reached and we can only observe attrition in equilibrium if the \emph{un}constrained maximum of $L$ is reached \emph{before} preemption starts. It indeed holds that $L_{\bar t}<F_{\bar t}$ (given $Y_0<\bar y$) iff
\begin{flalign}\label{X_tbar}
&& \frac{Y_{\bar t}}{r-\mu_Y}-\frac{c_B-c_A}{r}-I&<\frac{X_{\bar t}}{r-\mu_X} &&\nonumber\\
&\Leftrightarrow & \frac{\bar y}{r-\mu_Y}-\frac{c_B-c_A}{r}-I&<\frac{X_0e^{\mu_X\bar t}}{r-\mu_X}=\frac{X_0}{r-\mu_X}\biggl(\frac{\bar y}{Y_0}\biggr)^\frac{\mu_X}{\mu_Y} &&
\end{flalign}
(and $L_0<F_0\,\Leftrightarrow(X_0,Y_0)\not\in\overline\cP$ for $Y_0\geq\bar y$), so that there will be attrition iff the initial profitability $X_0$ of market $A$ is large enough to let $Y_t$ exceed $\bar y$ before $(X_t,Y_t)\in\overline\cP$.\footnote{%
If $\mu_X\not=0$, the path that $(X_t,Y_t)$ takes in the state space $R_+^2$ is given by the relation $y=Y_0(x/X_0)^{(\mu_Y/\mu_X)}$. For $\mu_X=0$, it is of course just $\{(X_0,y)\mid y\geq Y_0\}$.
}
Once $Y_t>\bar y$, resp.\ $t>\bar t$, $L_t$ strictly decreases and we observe attrition until preemption starts at $t_\cP$.

The equilibrium can now be represented as in Proposition \ref{prop:eqlF>L}, resp.\ Theorem \ref{thm:eql}, with the fixed preemption start date $t_\cP$ replacing $\tau_\cP(\vartheta)$ throughout and ${t\geq\bar t}$ replacing the dynamic attrition boundary ${Y_t\geq b(X_t)}$. Note that if there is attrition before the preemption region is reached, stopping will now occur with probability 1 due to attrition.\footnote{%
In analogy (in fact contrast) to Proposition \ref{prop:DG>0} and its proof it suffices to verify that
\begin{equation*}
\int_0^{\tau_\cP}\frac{dt}{X_t-Y_t+a}=\infty
\end{equation*}
when in the simplified notation $\tau_\cP=\inf\{t\geq 0\mid X_t-Y_t\leq a\}\in(0,\infty)$, since we are only interested in the case $Y_{\tau_\cP}>\bar y$ so that $Y_t$ is bounded away from $\bar y$ near $\tau_\cP$, where the above integral explodes. Indeed, with $Z_t=X_t-Y_t+a$ we have $\lim_{t\to\tau_\cP}Z_t=0$ and $dZ_t=(\mu_XX_0e^{\mu_Xt}-\mu_YY_0e^{\mu_Yt})\,dt$ and hence
\begin{equation*}
\infty=\lim_{t\to\tau_\cP}-\ln(Z_t)=-\ln(Z_0)-\int_0^{\tau_\cP}\frac{\mu_XX_0e^{\mu_Xt}-\mu_YY_0e^{\mu_Yt}}{X_t-Y_t+a}\,dt.
\end{equation*}
As the numerator in the integral is bounded on the assumed finite interval $[0,\tau_\cP]$, the claim follows.
}

As a numerical example, we use the same parameter values as before, but set $\sigma_X=\sigma_Y=0$. For easy reference, the left-panel of Figure~\ref{fig:dtm} includes the approximate boundary for the attrition region as derived in Section~\ref{sec:example}. Note that for each sample path it is clear \emph{a priori} that the game ends in attrition and that, once the attrition region is entered, it will never be exited again. As a result, the equilibrium attrition rate is a smooth function of time as can be seen in the right-panel of Figure~\ref{fig:dtm}. In this example non of the sample paths will ever end in preemption.

\begin{figure}[htb]
  \centering
  \subfloat[Sample path]{\includegraphics[height=6cm,width=5cm]{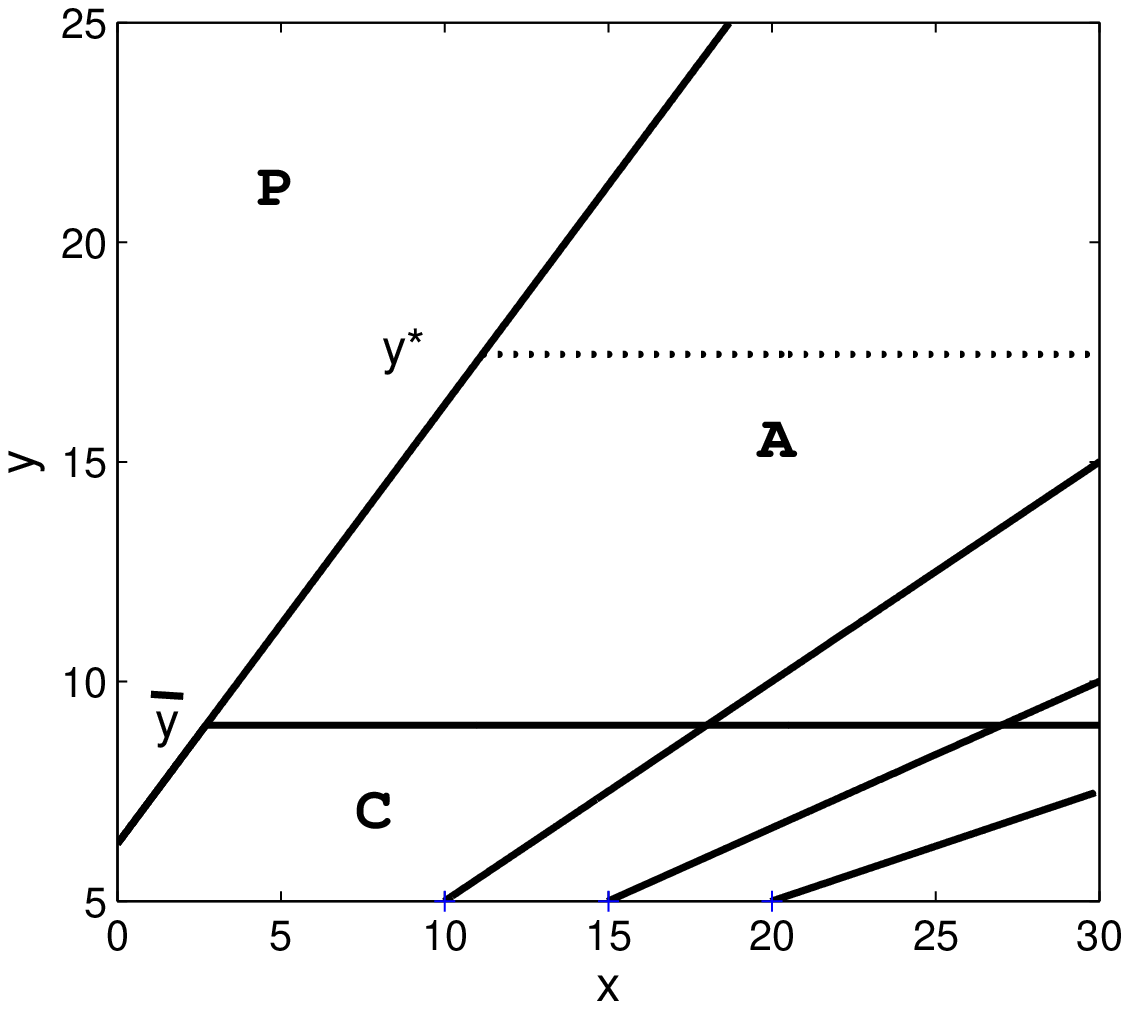}}\qquad
  \subfloat[Attrition rate]{\includegraphics[height=6cm,width=5cm]{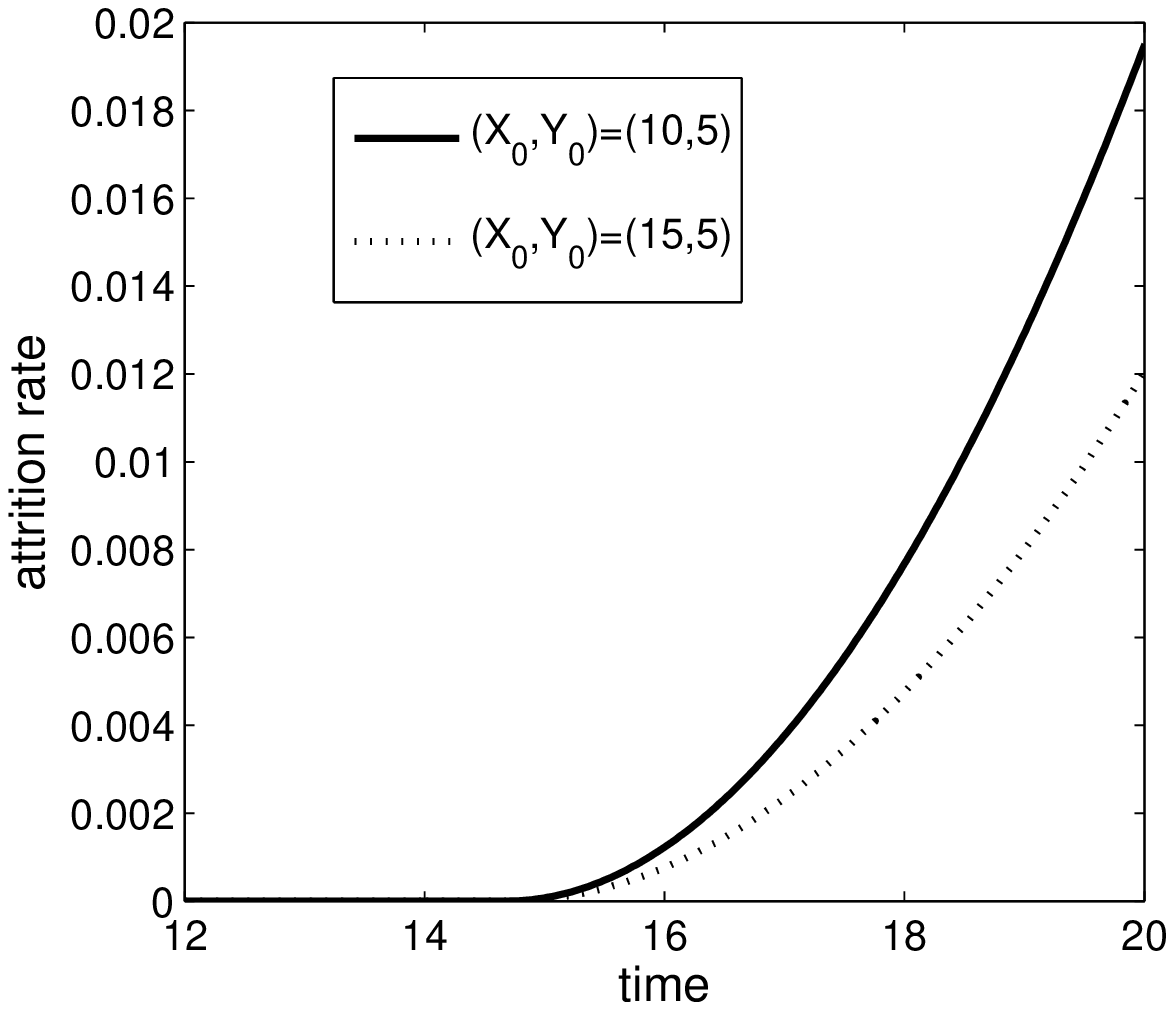}}
  \caption{Paths and attrition rates for deterministic case.}
  \label{fig:dtm}
  % Created with coffeePLdeterministic.m
\end{figure}

If, however, we take the same example, but with $\mu_Y=.06$ and $\mu_X=.02$, then the situation is very different. Figure~\ref{fig:dtm2} shows three sample paths that all end in preemption. Consequently, the attrition rate increases smoothly to 1 at the time the preemption region is hit. This is, again, very different from the typical sample path of the attrition rate as discussed in Section~\ref{sec:example}.

\begin{figure}[htb]
  \centering
  \subfloat[Sample path]{\includegraphics[height=6cm,width=5cm]{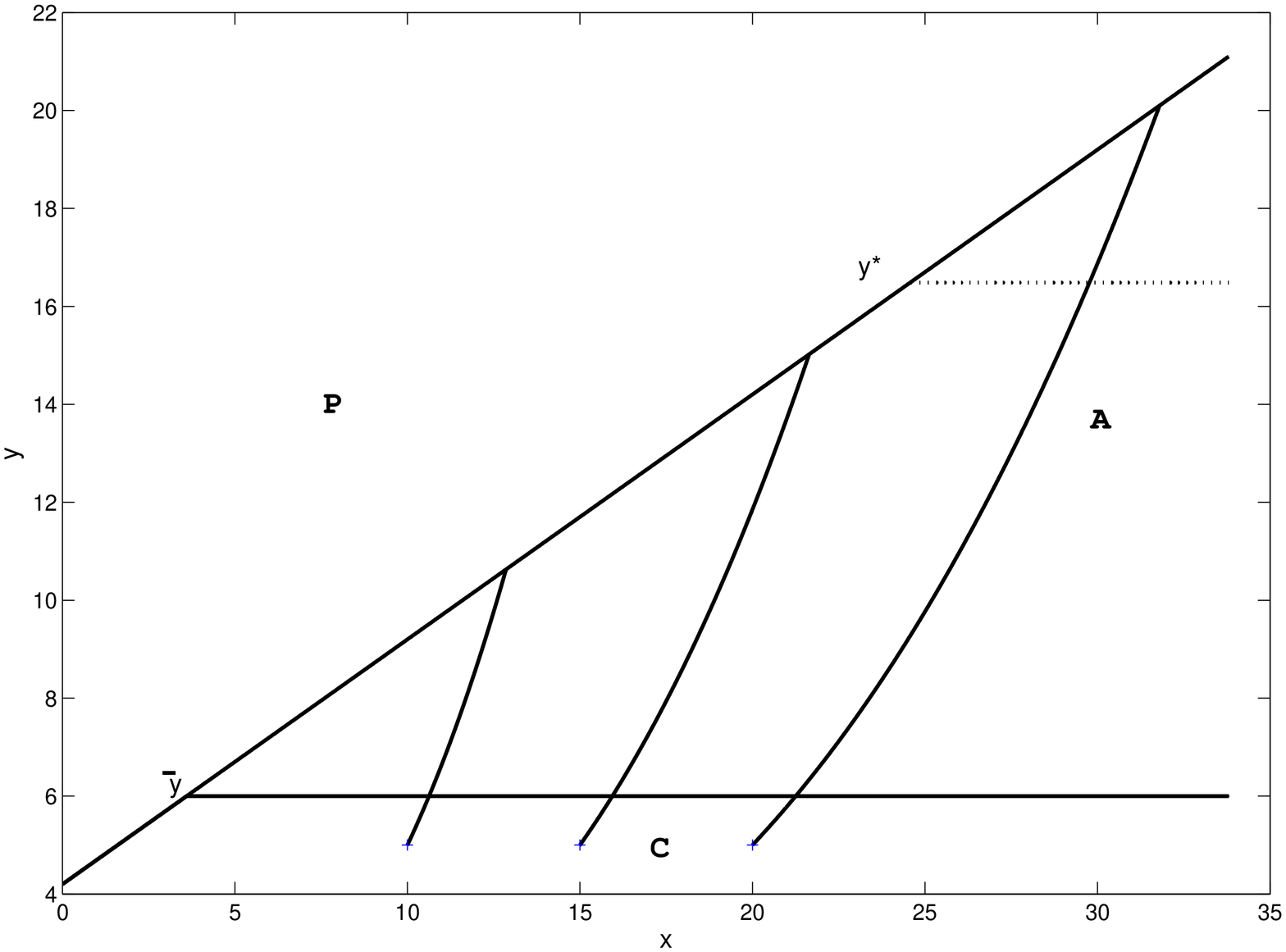}}\qquad
  \subfloat[Attrition rate]{\includegraphics[height=6cm,width=5cm]{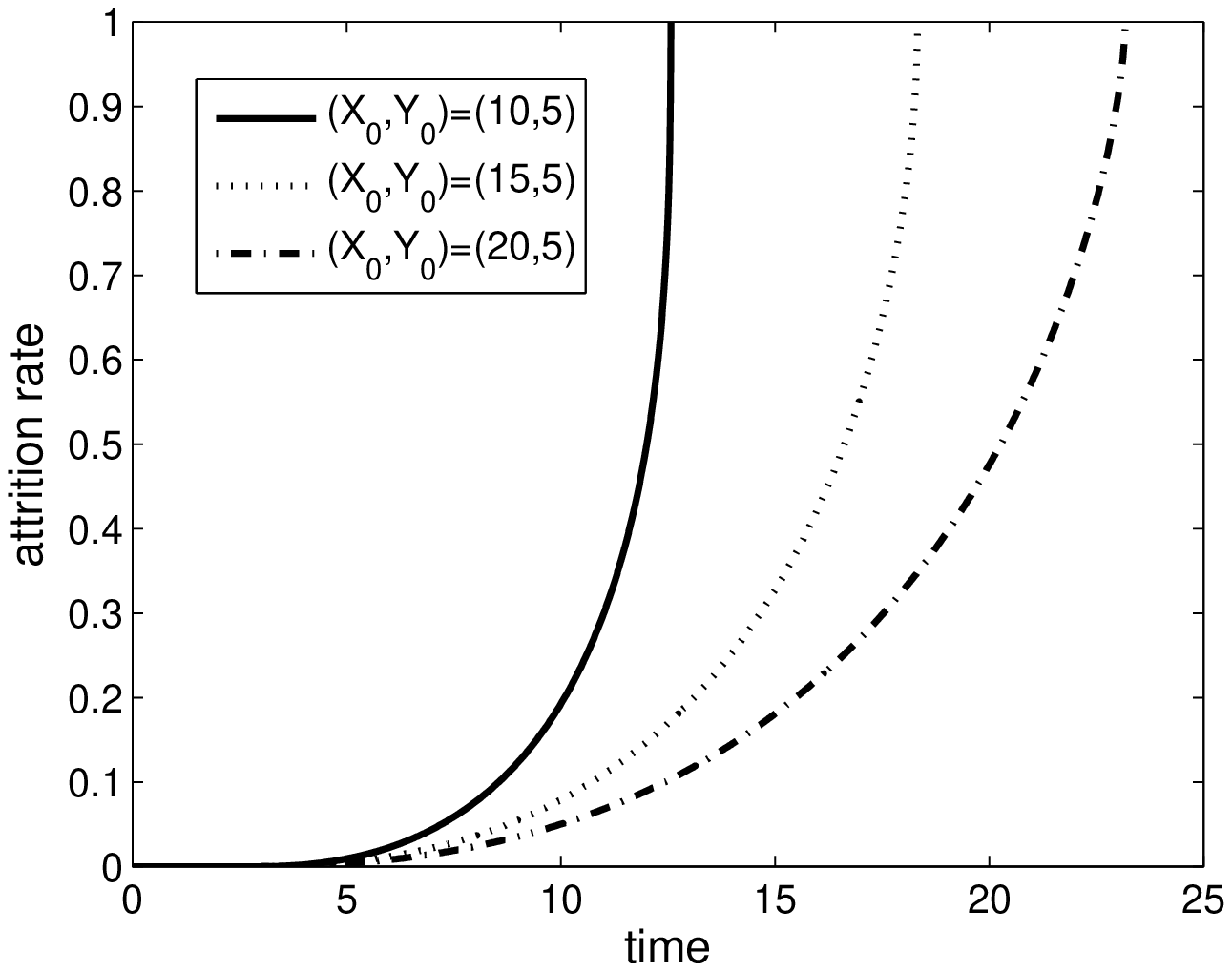}}
  \caption{Paths and attrition rates for deterministic case.}
  \label{fig:dtm2}
  % Created with coffeePLdtm.m
\end{figure}

\section{Concluding Remarks}\label{sec:concl}

In this paper we have built a continuous time stochastic model capturing a natural strategic investment problem that induces a sequence of local first- and second-mover advantages depending on the random evolution of the environment. Our focus has been mainly on regimes with second-mover advantages, as these are far more rarely considered in the real options literature. Despite the non-predictable development of the game we have constructed subgame perfect (Markovian) equilibria, for which we made use of different approaches from the theory of optimal stopping~-- given that we identified stopping problems at the heart of equilibrium construction that are not standard in the literature.

Even though we have made some strong assumptions on the underlying stochastic processes, choosing a quite specific form, it should be clear that our results~-- in particular the economics driving our equilibria~-- can be generalized to other underlying payoff processes. The assumption of geometric Brownian motion is mainly made so that the leader and follower payoff processes can be explicitly computed.

An economically important avenue for future research is to enrich the economic environment by introducing a continuation value for the follower. This could be even extended to a model where firms can switch as often as they wish. Such an extension, however, does not fall into the framework of timing games employed here, but will require a richer model of subgames and, particularly, histories (of investment).

\section*{Acknowledgements}
The authors gratefully acknowledge support from the Department of Economics \& Related Studies at the University of York. Jacco Thijssen gratefully acknowledges support from the Center for Mathematical Economics at Bielefeld University, as well as the Center for Interdisciplinary Research (ZiF) at the same university. Helpful comments were received from participants of the ZiF programme ``Robust Finance: Strategic Power, Knightian Uncertainty, and the Foundations of Economic Policy Advice''.

\newpage
\appendix

\section{Appendix}

\subsection{Proofs}\label{app:proofs}

\begin{proof}[{\bf Proof of Proposition \ref{prop:b(x)}}]
First consider $Y_0=0$, which is absorbing. By our assumptions $y_\cP\geq 0$, so $(\R_+,0)\subset\cP^c$. For $Y\equiv 0$ hence $\tau_\cP=\infty$ (so we face an unconstrained, deterministic problem) and thus $(x,0)\in\cC\Leftrightarrow V_{\tilde L}(x,0)>L(0,0)\Leftrightarrow 0<y^*$ by the solution of the unconstrained problem, or directly from $dL_t=e^{-rt}\bar y\,dt$ for $Y_0=0$ and $y^*>0\Leftrightarrow\bar y>0$. In this case $\tau_{\cC^c}=\infty$ is optimal, and $\tau_{\cC^c}=0$ if $y^*\leq 0$.

We now establish the monotonicity of the stopping and continuation regions for $\tilde L$ in the whole state space by strong path comparisons. Therefore denote the solution to \eqref{SDE} for given initial condition $(X_0,Y_0)=(x,y)\in\R_+^2$ by $(X^x,Y^y)=(xX^1,yY^1)$. Further write the continuation region as $\cC=\{(x,y)\in\R_+^2\mid V_{\tilde L}(x,y)>\tilde L(x,y)\}\subset\cP^c$, where $\tilde L(x,y):=L(0,y)\indi{(x,y)\in\cP^c}+F(0,x)\indi{(x,y)\in\cP}$; cf.\ \eqref{L(y)}. This means that if we fix any $(x_0,y_0)\in\cC$, then there exists a stopping time $\tau^*\in(0,\tau_\cP]$ a.s.\ with $V_{\tilde L}(x_0,y_0)\geq E\bigl[\tilde L_{\tau^*}\bigr]=E\bigl[L(\tau^*,Y^{y_0}_{\tau^*})\bigr]>\tilde L(x_0,y_0)=L(0,y_0)$. Now fix an arbitrary $\varepsilon\in(0,y_0)$ implying that $(X^{x_0},Y^{y_0-\varepsilon})=(X^{x_0},Y^{y_0}-\varepsilon Y^1)$ starts at $(x_0,y_0-\varepsilon)\in\cP^c$ and $\tau^*\leq\tau_\cP\leq\inf\{t\geq 0\mid(X^{x_0}_t,Y^{y_0-\varepsilon}_t)\in\cP\}$.

Hence, $V_{\tilde L}(x_0,y_0-\varepsilon)\geq E\bigl[L(\tau^*,Y^{y_0-\varepsilon}_{\tau^*})\bigr]=E\bigl[L(\tau^*,Y^{y_0}_{\tau^*})-e^{-r\tau^*}\varepsilon Y^1_{\tau^*}/(r-\mu_Y)\bigr]>L(0,y_0)-E\bigl[e^{-r\tau^*}\varepsilon Y^1_{\tau^*}/(r-\mu_Y)\bigr]\geq L(0,y_0)-\varepsilon/(r-\mu_Y)=L(0,y_0-\varepsilon)=\tilde L(x_0,y_0-\varepsilon)$. The last inequality is due to $(e^{-rt}Y^1)$ being a supermartingale by $r>\mu_Y$.

As $\varepsilon$ was arbitrary we can define $b(x):=\sup\{y\geq 0\mid V_{\tilde L}(x,y)>\tilde L(x,y)\}$ for any $x\in\R_+$ where that section of the continuation region is nonempty and conclude $V_{\tilde L}(x,y)>\tilde L(x,y)$ for all $y\in(0,b(x))$. Then we have $b(x)\leq y^*$ because immediate stopping is optimal in the unconstrained problem for $Y_0\geq y^*$, formally $L(0,y)=\tilde L(x,y)\leq V_{\tilde L}(x,y)\leq U_L(0)=L(0,y)$ for any $(X_0,Y_0)=(x,y)\in\cP^c$ with $y\geq y^*$.\footnote{%
Here $U_L$ is the Snell envelope of $L$ , i.e., the value process of the unconstrained stopping problem. See also the corresponding discussion for $\tilde L$ following Proposition~\ref{prop:b(x)}.
}
The same argument shows that the continuation region is empty if $y^*\leq 0$. In this case we can set $b(x):=y^*$.

The section of the continuation region for arbitrary $x\geq 0$ is in fact \emph{only} empty if $y^*\leq 0$, which completes the definition of $b$. Indeed, we have seen at the beginning of the proof that $(\R_+,0)\subset\cC$ if $y^*>0$ (and therefore actually $(x,y)\in\cC$ for all $y\in[0,b(x))$). We could also have applied the following important estimate for $x>0$. If $y^*>0$, i.e., if $\bar y>0$, then $b(x)\geq\min(\bar y,y_\cP+x(r-\mu_Y)/(r-\mu_X))$, which follows from the fact that $L$ is a continuous semimartingale with finite variation part
\begin{equation*}
\int_0^t-e^{-rs}(Y_s-\bar y)\,ds,
\end{equation*}
as can be seen from applying It\=o's Lemma. This drift is strictly positive on $\{Y<\bar y\}$, where stopping $L$ is therefore suboptimal and so it is too for $\tilde L$ (up to $\tau_\cP$).

For the monotonicity of $b$ pick any $(X_0,Y_0)=(x_0,y_0)\in\cC$ and $\tau^*\leq\tau_\cP$ as before and fix an arbitrary $\varepsilon>0$. Then $(X^{x_0+\varepsilon},Y^{y_0})=(X^{x_0}+\varepsilon X^1,Y^{y_0})$ starts at $(x_0+\varepsilon,y_0)\in\cP^c$ and $\tau^*\leq\tau_\cP\leq\inf\{t\geq 0\mid(X^{x_0+\varepsilon}_t,Y^{y_0}_t)\in\cP\}$. Now $V_{\tilde L}(x_0+\varepsilon,y_0)\geq E\bigl[L(\tau^*,Y^{y_0}_{\tau^*})\bigr]>L(0,y_0)=\tilde L(x_0+\varepsilon,y_0)$, whence $b(x_0+\varepsilon)\geq b(x_0)$.

The continuity of $b$ in $x_0=0$ holds by the following estimates. Below we show that $b(x)\geq\min(y_\cP,y^*)$, which together with monotonicity yields a right-hand limit and $b(0),b(0+)\geq\min(y_\cP,y^*)$. On the other hand, by definition $b(x)\leq y_\cP+x(r-\mu_Y)/(r-\mu_X)$ and $b(x)\leq y^*$ as shown above. Hence $b(0)=b(0+)=\min(y_\cP,y^*)$. To verify the continuity of $b$ in $x_0>0$ we show that if $(x_0,y_0)\in\cC$, then also the line $\{(x,y)\in\R_+^2\mid x\in(0,x_0],y=xy_0/x_0\}\subset\cC$.\footnote{%
A graphical illustration helps to convey the continuity argument for $b(x)$.

\centering
   \begin{tikzpicture}[inner sep=0pt,minimum size=0pt,label distance=3pt]
    \draw[->] (-0.1,0) -- (2,0) {}; % x axis
    \draw[->] (0,-0.1) -- (0,2) {}; % y axis
    \draw[-] (0.2,0.5) .. controls (0.4,0.7) and (0.5,0.8) .. (0.75,1) [] {};
    \draw[-] (0.75,1.3) .. controls (1,1.6) and (1,1.6) .. (1.4,1.8) [] {};
    \draw[dashed] (0,0) -- (0.8,1.2) [] {};
    \fill[black] (0.8,1.2) circle (.04);
    \filldraw[fill=white,draw=black] (0.75,1) circle (.04);
    \filldraw[fill=white,draw=black] (0.75,1.3) circle (.04);
    \node at (0,1.8) [label=left:$y$] {};
    \node at (0,1.2) [label=left:$y_0$] {};
    \node at (0.8,0) [label=below:$x_0$] {};
    \node at (1.3,1.7) [label=right:$b(x)$] {};
    \node at (1.3,0.5) [label=above:$\cC$] {};
    \node at (0.35,1) [label=above:$\cC^c$] {};
    \node at (1.9,0) [label=below:$x$] {};
  \end{tikzpicture}
}
As the first step, we check that given $(X_0,Y_0)=(x_0,y_0)\in\cP^c$, $\tau_\cP\leq\inf\{t\geq 0\mid(X^x_t,Y^y_t)\in\cP\}$ for any point $(x,y)$ on the specified line. Indeed, for any $(X^x,Y^y)$ we can write
\begin{align*}
L_t\leq F_t&\Leftrightarrow \frac{Y^y_t}{r-\mu_Y}-\frac{X^x_t}{r-\mu_X}\leq\frac{c_B-c_A}{r}+I=\frac{y_\cP}{r-\mu_Y}\\
&\Leftrightarrow \frac{y}{r-\mu_Y}Y^1_t-\frac{x}{r-\mu_X}X^1_t\leq\frac{y_\cP}{r-\mu_Y}.
\end{align*}
If $y=xy_0/x_0$ this becomes
\begin{align*}
\frac{y}{r-\mu_Y}Y^1_t-\frac{x}{r-\mu_X}X^1_t=\frac{x}{x_0}\biggl(\frac{y_0}{r-\mu_Y}Y^1_t-\frac{x_0}{r-\mu_X}X^1_t\biggr)\leq\frac{y_\cP}{r-\mu_Y}.
\end{align*}
As $y_\cP/(r-\mu_Y)$ is nonnegative, the condition for $(x,y)$ is implied by the one for $(x_0,y_0)$ if $x\in[0,x_0]$. Therefore, if we fix any point on the line $\{(x,y)\in\R_+^2\mid x\in(0,x_0],y=xy_0/x_0\}$, we have $\tau^*\leq\inf\{t\geq 0\mid(X^x_t,Y^y_t)\in\cP\}$, where $\tau^*$ as before satisfies $E\bigl[L(\tau^*,Y^{y_0}_{\tau^*})\bigr]>L(0,y_0)$ for the given $(x_0,y_0)\in\cC$.

In particular $(x,y)\in\cP^c$. Now suppose also $(x,y)\in\cC^c$. Then $L(0,y)=V_{\tilde L}(x,y)\geq E\bigl[L(\tau^*,Y^y_{\tau^*})\bigr]=E\bigl[L(\tau^*,Y^{y_0}_{\tau^*})-e^{-r\tau^*}(y_0-y)/(r-\mu_Y)Y^1_{\tau^*}\bigr]>L(0,y_0)-E\bigl[e^{-r\tau^*}(y_0-y)/(r-\mu_Y)Y^1_{\tau^*}\bigr]\geq L(0,y_0)-(y_0-y)/(r-\mu_Y)=L(0,y)$, a contradiction (where we again used the fact that $(e^{-rt}Y^1)$ is a supermartingale). Thus, $(x,y)\in\cC$.

Now we argue that $\partial\cC\subset\cC^c$, except possibly for the origin. Suppose $y^*>0$, i.e., $\bar y>0$ (otherwise $\cC=\emptyset$). By $b(x)\geq\min(\bar y,y_\cP+x(r-\mu_Y)/(r-\mu_X))$, the only point in $\partial\cC$ with $y=b(x)=0$ can be the origin (and that only if $y_\cP=0$). If $y=Y_0>0$ and $(X_0,Y_0)\in\partial\cC$, then $(X,Y)$ will enter the interior of $\cC^c$ immediately with probability 1. Indeed we have shown that $b(x+h)\leq b(x)+hb(x)/x$ for any $x,h>0$, which implies with the monotonicity of $b$ that if $Y_0=b(X_0)>0$ and $X_0>0$, then $\{Y_t>b(X_t)\}\supset\{Y_t>Y_0\}\cap\{Y_t>Y_0X_t/X_0\}=\{(\mu_Y-\sigma_Y^2/2)t+\sigma_YB^Y_t>0\}\cap\{(\mu_Y-\sigma_Y^2/2)t+\sigma_YB^Y_t-(\mu_X-\sigma_X^2/2)t-\sigma_XB^X_t>0\}$. After normalization we see that the latter two sets are both of the form $\{B_t+\mu t>0\}$ for some Brownian motion $B$ with drift $\mu\in\R$ (we normalize by $\sigma_Y$ in the first and $(\sigma_Y^2-2\rho\sigma_X\sigma_Y+\sigma_X^2)^{1/2}$ in the second case; recall $\sigma_YB^Y$ and $\sigma_YB^Y-\sigma_XB^X$ are not degenerate). Therefore, the hitting times of either set, which we denote by $\tau_1$ and $\tau_2$ are 0 a.s., and $P[\inf\{t\geq 0\mid Y_t>b(X_t)\}=0]\geq 1-P[\tau_1>0]-P[\tau_2>0]=1$, i.e., the correlation $\rho$ between $B^X$ and $B^Y$ is not important. A simpler version of the argument applies to the case $X_0=0$ and $Y_0=b(0)>0$, when $\{Y_t>b(X_t)\}=\{Y_t>Y_0\}$.

It remains to prove that $b(x)\geq\min(y_\cP,y^*)$. If $y^*\leq y_\cP$ then $\tau_\cP\geq\tau^*(y):=\inf\{t\geq 0\mid Y_t\geq y^*\}$ for $(X_0,Y_0)=(x,y)\in\cP^c$ and thus $E\bigl[L_{\tau^*(y)}\bigr]=U_L(0)\geq V_{\tilde L}(x,y)\geq E\bigl[L_{\tau^*(y)}\bigr]$. Hence, $U_L(0)=V_{\tilde L}(x,y)>L(0,y)=\tilde L(x,y)$ for $y<y^*$, implying $b(x)\geq y^*$.

For $y^*>y_\cP$ we will show that $\tilde L(x,y)=L(0,y)<E\bigl[L_{\tau(y_\cP)}\bigr]\leq V_{\tilde L}(x,y)$ for any $(x,y)$ with $y<y_\cP$, where $\tau(y_\cP):=\inf\{t\geq 0\mid Y_t\geq y_\cP\}\leq\tau_\cP$, which implies that $b(x)\geq y_\cP$. On $\{Y_0=y<y_\cP\}$ we obtain by the definition of $y_\cP$ that $E\bigl[L_{\tau(y_\cP)}\bigr]=-c_0/r+(y/y_\cP)^{\beta_1}(c_0-c_A)/r$,\footnote{\label{fn:U_L}%
For any threshold $\hat y>0$ and its hitting time $\tau(\hat y):=\inf\{t\geq 0\mid Y_t\geq \hat y\}$ by the geometric Brownian motion $Y$ we have by standard results
\begin{equation*}
E[L_{\tau(\hat y)}]=-\frac{c_0}{r}+\biggl(\frac{Y_0}{\hat y}\biggr)^{\beta_1}\biggl(\frac{\hat y}{r-\mu_Y}-\frac{c_B-c_0}{r}-I\biggr)\text{ if }Y_0<\hat y.
\end{equation*}
}
which is a continuous function of $y$ that extends to $E\bigl[L_{\tau(y_\cP)}\bigr]=L(0,y_\cP)$ for $Y_0=y=y_\cP$ and to $E\bigl[L_{\tau(y_\cP)}\bigr]=-c_0/r$ for $Y_0=y=0$. Hence, $E\bigl[L_{\tau(y_\cP)}\bigr]-L(0,y)$ vanishes at $y=y_\cP$, while at $y=0$ it attains $I+(c_B-c_0)/r$, which is strictly positive if $y^*>0$. Now $\partial_y E\bigl[L_{\tau(y_\cP)}-L(0,y)\bigr]<0$ for all $y\in(0,y_\cP)$ iff $\beta_1(c_0-c_A)(y/y_\cP)^{\beta_1-1}<c_B-c_A+rI$ for all $y\in(0,y_\cP)$, iff $\beta_1(c_0-c_A)\leq c_B-c_A+rI$ (since $\beta_1>1$), i.e., iff $y^*\geq y_\cP$.
\end{proof}

\begin{proof}[{\bf Proof of Proposition \ref{prop:dD_Lswitch}}]
As in the proof of Proposition \ref{prop:b(x)} we begin with the deterministic case $Y_0=0$, whence $\tau_\cP=\infty$ and $dL_t=e^{-rt}\bar y\,dt$. In this case the Snell envelope of $\tilde L=L$ is the smallest nonincreasing function dominating $L$. For $\bar y\geq 0\Leftrightarrow y^*\geq 0$, it is just the constant $L_\infty=-c_0/r$ and thus $dD_{\tilde L}=0$, while for $\bar y<0\Leftrightarrow y^*<0$ it is $L$ itself and thus $dD_{\tilde L}=-dL$. Both correspond to the claim, cf.\ also Remark \ref{rem:Y=0}.

For $Y_0>0=X_0$, the solution is by Proposition \ref{prop:b(x)} to stop the first time $Y$ exceeds $b(0)=\min(y_\cP,y^*)$. Then the Snell envelope is explicitly known for $t\leq\tau_\cP$, as in footnote \ref{fn:U_L} with $\hat y=\min(y_\cP,y^*)$, inserting $Y_t$ and discounting the brackets by $e^{-rt}$ (and $U_{\tilde L}=\tilde L$ for $Y\geq\hat y$). In this case $dD_{\tilde L}$ can be determined as the drift of $U_{\tilde L}$ from It\=o's formula. One can also reduce the following argument to the 1-dimensional case.

For $X_0,Y_0>0$ first note that the Snell envelope $U_{\tilde L}=V_{\tilde L}(X,Y)$ is continuous because $V_{\tilde L}(\cdot)$ is a continuous function; see, e.g., \cite{Krylov80}. The question when $D_{\tilde L}$ is exactly the decreasing component of the continuous semimartingale $\tilde L$ in the stopping region is addressed by \citet{Jacka93}: It is the case if the local time of the nonnegative semimartingale $U_{\tilde L}-\tilde L$ at 0 is trivial, i.e., if $L^0(U_{\tilde L}-\tilde L)\equiv 0$ (a.s.). We can show that the latter is indeed true in our model by applying the argument of Jacka's Theorem 6.

Specifically, It\=o's Lemma shows that $\tilde L$ is a continuous semimartingale with finite variation part $A_t:=\int_0^t-\indi{s<\tau_\cP}e^{-rs}(Y_s-c_B+c_0-rI)\,ds$. Denote its decreasing part by $A^-$, which here satisfies
\begin{equation*}
dA^-_t=\indi{Y>\bar y}\,e^{-rt}(Y_t-c_B+c_0-rI)\,dt
\end{equation*}
for $t<\tau_\cP$. By Theorem 3 of \cite{Jacka93}, $dL^0_t(U_{\tilde L}-\tilde L)\leq \indi{U_{\tilde L}=\tilde L,t<\tau_\cP}2\,dA^-_t$, and as in his Theorem 6, $dL^0(U_{\tilde L}-\tilde L)$ is supported by $\{(X,Y)\in\partial\cC\}$.

If we let $A^-$ as above for all $t\in\R_+$,
\begin{equation*}
E\bigl[L^0_t(U_{\tilde L}-\tilde L)\bigr]\leq E\biggl[2\int_0^t\indi{(X,Y)\in\partial\cC,s<\tau_\cP}\,dA^-_s\biggr]\leq E\biggl[2\int_0^t\indi{(X,Y)\in\partial\cC}\,dA^-_s\biggr].
\end{equation*}
$dA^-$ apparently has a Markovian density with respect to Lebesgue measure on $R_+$, and our underlying diffusion $(X,Y)$ has a log-normal transition distribution, which thus has a density with respect to Lebesgue measure on $R_+^2$. As $\partial\cC=\{(x,y)\in\R_+^2\mid y=b(x)\}$ has Lebesgue measure 0 in $R_+^2$, we conclude like in the proof of Theorem 6 of \cite{Jacka93} that $L^0(U_{\tilde L}-\tilde L)\equiv 0$ (a.s.).
\end{proof}

\begin{proof}[{\bf Proof of Proposition \ref{prop:DG>0}}]
We need to show that
\begin{equation}\label{eq:eqlF>L_a}
\int_0^{\tau_\cP}\frac{dG^\vartheta_i(t)}{1-G^\vartheta_i(t)}=\int_0^{\tau_\cP}\frac{\indi{Y_t\geq b(X_t)}(Y_t-\bar y)\,dt}{X_t/(r-\mu_X)-(Y_t-y_\cP)/(r-\mu_Y)}
<\infty
\end{equation}
on $\{\tau_\cP<\infty\}$ a.s. Note that $\tau_\cP=\inf\{t\geq\vartheta\mid L_t\geq F_t\}$ if $(X_0,Y_0)\not=(0,0)$ or $y_\cP>0$ (see the proof of Proposition \ref{prop:eqlF>L}). As $Y$ is continuous, it is bounded on $[0,\tau_\cP]$ where this interval is finite. Hence we may just use $dt$ as the numerator in \eqref{eq:eqlF>L_a}. We will also ignore $(r-\mu_X)$ and $(r-\mu_Y)$, which may be encoded in the initial position $(X_0,Y_0)$. By the strong Markov property we set $\vartheta=0$. Hence, we will prove that
\begin{equation}\label{eq:int1/Z}
\int_0^{\tau_0}\frac{dt}{X_t-Y_t+a}
<\infty
\end{equation}
on $\{\tau_0<\infty\}$ a.s.\ for nonnegative geometric Brownian motions $X$ and $Y$ (under our nondegeneracy assumption $\sigma_X^2,\sigma_Y^2>0$, $\abs{\rho}<1$), some fixed level $a>0$ and the stopping time $\tau_\varepsilon:=\inf\{t\geq 0\mid X_t-Y_t+a\leq\varepsilon\}$ with $\varepsilon=0$. We will treat the special case $y_\cP=a=0$ at the very end of the proof. Define the process $Z:=X-Y+a$ to simplify notation.

As a first step and tool, we derive the weaker result $E[\int_0^{\tau_0\wedge T}\ln(Z_t)\,dt]\in\R$ for any time $T>0$ (which implies also $\int_0^{\tau_0}\abs{\ln(Z_t)}\,dt<\infty$ on $\{\tau_0<\infty\}$ a.s.\ by the arguments towards the end of the proof).

Define the function $f\colon\R_+\to\R$ by
\begin{equation*}
f(x)=x\ln(x)-x\in[-1,x^2]
\end{equation*}
and the function $F\colon\R_+\to\R$ by $F(x)=\int_0^xf(y)\,dy\in[-x,x^3]$, such that $F''(x)=\ln(x)$ for all $x>0$. For localization purposes fix an $\varepsilon>0$ and a time $T>0$. By It\=o's formula we have
\begin{align*}
F(Z_{\tau_\varepsilon\wedge T})=F(Z_0)+&\int_0^{\tau_\varepsilon\wedge T}f(Z_t)\,dZ_t+\frac12\int_0^{\tau_\varepsilon\wedge T}\ln(Z_t)\,d[Z]_t\\
=F(Z_0)+&\int_0^{\tau_\varepsilon\wedge T}f(Z_t)\bigl(\mu_XX_t-\mu_YY_t\bigr)\,dt\\
+&\int_0^{\tau_\varepsilon\wedge T}f(Z_t)\Bigl(\sigma_XX_t\,dB^X_t-\sigma_YY_t\,dB^Y_t\Bigr)\\
+&\frac12\int_0^{\tau_\varepsilon\wedge T}\ln(Z_t)\bigl(\underbrace{\sigma^2_XX^2_t+\sigma^2_YY^2_t-2\rho\sigma_X\sigma_YX_tY_t}_{=:\sigma^2(X_t,Y_t)}\bigr)\,dt.
\end{align*}
We want to establish $\lim_{\varepsilon\searrow 0}E[F(Z_{\tau_\varepsilon\wedge T})]$, first in terms of the integrals, for which we need some estimates.

In order to eliminate the second, stochastic integral by taking expectations, it is sufficient to verify that $\indi{t<\tau_0}f(Z_t)X_t$ and $\indi{t<\tau_0}f(Z_t)Y_t$ are square \nbd{P\otimes dt}integrable on $\Omega\times[0,T]$. We have $\abs{f(Z_t)}\leq 1+Z^2_t$. For $t\leq\tau_0$ further $0\leq Z_t\leq X_t+a$ by $Y_t\geq 0$ and hence $Z^2_t\leq(X_t+a)^2\leq 2X^2_t+2a^2$. The sought square-integrability with $X$ follows now from the \nbd{P\otimes dt}integrability of $X^n_t$ on $\Omega\times[0,T]$ for any $n\in\N$ and analogously that for $Y$ thanks to $0\leq Y_t\leq X_t+a$ for $t\leq\tau_0$.

The same estimates guarantee that the expectation of the first integral converges to the finite expectation at $\tau_0$ as $\varepsilon\searrow 0$. For the third integral we have $\ln(Z_t)\leq Z_t\leq X_t+a$. The second term $\sigma^2(X_t,Y_t)$ is bounded from below by $(\sigma_XX_t-\sigma_YY_t)^2\geq 0$ and from above by $(\sigma_XX_t+\sigma_YY_t)^2\leq((\sigma_X+\sigma_Y)X_t+\sigma_Ya)^2\leq 2(\sigma_X+\sigma_Y)^2X^2_t+2\sigma_Y^2a^2$ for $t\leq\tau_0$ (supposing $\sigma_X,\sigma_Y>0$ wlog.). Hence the positive part of the integrand is bounded by a \nbd{P\otimes dt}integrable process on $\Omega\times[0,T]$, while the negative part converges monotonically, and we may take the limit of the expectation of the whole integral.

That the latter is finite follows from analysing the limit of the LHS, $\lim_{\varepsilon\searrow 0}E[F(Z_{\tau_\varepsilon\wedge T})]$, directly. We have $F(Z_{\tau_\varepsilon\wedge T})=\indi{\tau_\varepsilon\leq T}F(\varepsilon)+\indi{T<\tau_\varepsilon}F(Z_T)$, and it is continuous in $\varepsilon\searrow 0$. For $T<\tau_0$ again $0\leq Z_T\leq X_T+a$ and thus $\abs{F(Z_T)}\leq\abs{Z_T}+\abs{Z_T}^3\leq (X_T+a)+(X_T+a)^3$. As also $\abs{F(\varepsilon)}\leq 1$ for all $\varepsilon\leq 1$, $\abs{F(Z_{\tau_\varepsilon\wedge T})}$ is bounded by an integrable random variable as $\varepsilon\searrow 0$. Consequently, $\lim_{\varepsilon\searrow 0}E[F(Z_{\tau_\varepsilon\wedge T})]=E[F(Z_{\tau_0\wedge T})]\in\R$ and also $E[\int_0^{\tau_0\wedge T}\ln(Z_t)\sigma^2(X_t,Y_t)\,dt]\in\R$ on the RHS. In the integral we may ignore the term $\sigma^2(X_t,Y_t)$, which completes the first step. Indeed, with $\abs{\rho}<1$ we can only have $\sigma^2(X_t,Y_t)=0$ if $X_t=Y_t=0$, i.e., if $Z_t=a$. Therefore $\inf\{\sigma^2(X_t,Y_t)\mid Z_t\leq\varepsilon\}>0$ for any fixed $\varepsilon\in(0,a)$, so $\sigma^2(X_t,Y_t)$ does not ``kill'' the downside of $\ln(Z_t)$ and
\begin{equation*}
E\biggl[\int_0^{\tau_0\wedge T}\ln(Z_t)\,dt\biggr]\in\R.
\end{equation*}

In the following we further need $E[\int_0^{\tau_0\wedge T}\ln(Z_t)X_t\,dt]\in\R$, which obtains as follows. With $\abs{\rho}<1$, $\inf\{\sigma^2(X_t,Y_t)\mid Z_t\leq\varepsilon\}$ is attained only with the constraint binding, $Y_t=X_t+a-\varepsilon$. Thus, for $Z_t\leq\varepsilon$, $\sigma^2(X_t,Y_t)\geq\sigma^2(X_t,X_t+a-\varepsilon)$. The latter is a quadratic function of $X_t$, with $X^2_t$ having coefficient $\sigma^2(1,1)>0$ given $\abs{\rho}<1$. The quadratic function hence exceeds $X$ for all $X$ sufficiently large, i.e., we can pick $K>0$ such that $\sigma^2(X_t,Y_t)\geq X_t$ on $\{Z_t\leq\varepsilon\}\cap\{X_t\geq K\}$. $X_t$ thus does not ``blow up'' the downside of $\ln(Z_t)$ more than $\sigma^2(X_t,Y_t)$.

We are now ready to analyse
\begin{align*}
f(Z_{\tau_\varepsilon\wedge T})=f(Z_0)+&\int_0^{\tau_\varepsilon\wedge T}\ln(Z_t)\,dZ_t+\frac12\int_0^{\tau_\varepsilon\wedge T}\frac{1}{Z_t}\,d[Z]_t\\
=f(Z_0)+&\int_0^{\tau_\varepsilon\wedge T}\ln(Z_t)\bigl(\mu_XX_t-\mu_YY_t\bigr)\,dt\\
+&\int_0^{\tau_\varepsilon\wedge T}\ln(Z_t)\Bigl(\sigma_XX_t\,dB^X_t-\sigma_YY_t\,dB^Y_t\Bigr)\\
+&\frac12\int_0^{\tau_\varepsilon\wedge T}\frac{1}{Z_t}\sigma^2(X_t,Y_t)\,dt
\end{align*}
when taking the limit $\varepsilon\searrow 0$ under expectations as before. By our final observations of step one the first integral converges (note again $Y_t\leq X_t+a$ for $t\leq\tau_0$). In the second, stochastic integral we now have $\abs{\ln(Z_t)}\leq\abs{\ln(\varepsilon)}+\abs{Z_t}$ for $t\leq\tau_\varepsilon$, an even smaller bound than above, making the expectation vanish. In the third integral, $\sigma^2(X_t,Y_t)/Z_t\geq 0$ for $t\leq\tau_0$, so monotone convergence holds.

On the LHS, $\abs{f(Z_{\tau_\varepsilon\wedge T})}$ is bounded by an integrable random variable for all $\varepsilon\leq 1$ analogously to the first step, implying $\lim_{\varepsilon\searrow 0}E[f(Z_{\tau_\varepsilon\wedge T})]=E[f(Z_{\tau_0\wedge T})]\in\R$ and thus $E[\int_0^{\tau_0\wedge T}\sigma^2(X_t,Y_t)/Z_t\,dt]<\infty$. By the same arguments brought forward at the end of the first step we can again ignore $\sigma^2(X_t,Y_t)$.

With $E[\int_0^{\tau_0\wedge T}1/Z_t\,dt]<\infty$, $P[\{\tau_0\leq T\}\cap\{\int_0^{\tau_0}1/Z_t\,dt=\infty\}]=0$. As $T$ was arbitrary, we may take the union over all integer $T$ to conclude
$P[\{\tau_0<\infty\}\cap\{\int_0^{\tau_0}1/Z_t\,dt=\infty\}]=0$. This completes the proof of \eqref{eq:int1/Z} for the case $a>0$ (and that of \eqref{eq:eqlF>L_a} for $y_\cP>0$).

Some modification is due for the case $y_\cP=a=0$ and if $(X_0,Y_0)\not=(0,0)$ (otherwise $\tau_0=0$ and \eqref{eq:int1/Z} is trivial). Then $\sigma^2(X_t,Y_t)$ is not bounded away from 0 for $Z_t$ small, it may be an important factor in the integrability of $\ln(Z_t)\sigma^2(X_t,Y_t)$ when $(X_t,Y_t)$ is close to the origin. In order to infer the required well-behaved limit of $E[\int_0^{\tau_\varepsilon\wedge T}\ln(Z_t)\bigl(\mu_XX_t-\mu_YY_t\bigr)\,dt]$ and to finally remove $\sigma^2(X_t,Y_t)$ from $\int_0^{\tau_0}\sigma^2(X_t,Y_t)/Z_t\,dt$, one can employ another localization procedure: Fix a small $\delta>0$ and use the minimum of $\sigma_\delta:=\inf\{t\geq 0\mid X_t+Y_t<\delta\}$ and $\tau_\varepsilon\wedge T$ everywhere above. Then $\sigma^2(X_t,Y_t)$ is bounded away from 0 on $[0,\sigma_\delta\wedge\tau_0]$ for $\abs{\rho}<1$ and again bounded below by a quadratic function with $X^2_t$ having coefficient $\sigma^2(1,1)>0$. The result now obtains as above for all paths with $X_t+Y_t\geq\delta$ on $[0,\tau_0]$. As $\delta>0$ is indeed arbitrary and any path with $(X_0,Y_0)\not=(0,0)$ is bounded away from the origin on $[0,\tau_0]$ finite by continuity, the claim follows.
\end{proof}

\subsection{Outcome Probabilities}\label{app:outcome}

The following definition is a simplification of that in \cite{RiedelSteg14}, resulting from right-continuity of any $\alpha^\vartheta_i(\cdot)$ also where it takes the value 0.

Define the functions $\mu_L$ and $\mu_M$ from $[0,1]^2\setminus (0,0)$ to $[0,1]$ by
\begin{align*}
\mu_L(x,y):=\frac{x(1-y)}{x+y-xy}\qquad\text{and}\qquad\mu_M(x,y):=\frac{xy}{x+y-xy}.
\end{align*}
$\mu_L(a_i,a_j)$ is the probability that firm $i$ stops first in an infinitely repeated stopping game where $i$ plays constant stage stopping probabilities $a_i$ and firm $j$ plays constant stage probabilities $a_j$. $\mu_M(a_i,a_j)$ is the probability of simultaneous stopping and $1-\mu_L(a_i,a_j)-\mu_M(a_i,a_j)=\mu_L(a_j,a_i)$ that of firm $j$ stopping first.

\begin{definition}\label{def:outcome}
Given $\vartheta\in\T$ and a pair of extended mixed strategies $\bigl(G^\vartheta_1,\alpha^\vartheta_1\bigr)$ and $\bigl(G^\vartheta_2,\alpha^\vartheta_2\bigr)$, the \emph{outcome probabilities} $\lambda^\vartheta_{L,1}$, $\lambda^\vartheta_{L,2}$ and $\lambda^\vartheta_M$ at $\hat\tau^\vartheta:=\inf\{t\geq\vartheta\mid\alpha^\vartheta_1(t)+\alpha^\vartheta_2(t)>0\}$ are defined as follows. Let $i,j\in\{1,2\}$, $i\not=j$.

\noindent
If $\hat\tau^\vartheta<\hat\tau^\vartheta_j:=\inf\{t\geq\vartheta\mid\alpha^\vartheta_j(t)>0\}$, then
\begin{align*}
\lambda^\vartheta_{L,i}:={}&\bigl(1-G_i^\vartheta(\hat\tau^\vartheta-)\bigr)\bigl(1-G_j^\vartheta(\hat\tau^\vartheta)\bigr),\\[6pt]
\lambda^\vartheta_M:={}&\bigl(1-G_i^\vartheta(\hat\tau^\vartheta-)\bigr)\alpha^\vartheta_i(\hat\tau^\vartheta)\Delta G_j^\vartheta(\hat\tau^\vartheta).
\end{align*}
If $\hat\tau^\vartheta<\hat\tau^\vartheta_i:=\inf\{t\geq\vartheta\mid\alpha^\vartheta_i(t)>0\}$, then
\begin{align*}
\lambda^\vartheta_{L,i}:={}&\bigl(1-G_j^\vartheta(\hat\tau^\vartheta-)\bigr)\bigl(1-\alpha_j(\hat\tau^\vartheta)\bigr)\Delta G_i^\vartheta(\hat\tau^\vartheta),\\[6pt]
\lambda^\vartheta_M:={}&\bigl(1-G_j^\vartheta(\hat\tau^\vartheta-)\bigr)\alpha^\vartheta_j(\hat\tau^\vartheta)\Delta G_i^\vartheta(\hat\tau^\vartheta).
\end{align*}
If $\hat\tau^\vartheta=\hat\tau^\vartheta_1=\hat\tau^\vartheta_2$ and $\alpha^\vartheta_1(\hat\tau^\vartheta)+\alpha^\vartheta_2(\hat\tau^\vartheta)>0$, then
\begin{align*}
\lambda^\vartheta_{L,i}:={}&\bigl(1-G_i^\vartheta(\hat\tau^\vartheta-)\bigr)\bigl(1-G_j^\vartheta(\hat\tau^\vartheta-)\bigr)\mu_L(\alpha^\vartheta_i(\hat\tau^\vartheta),\alpha^\vartheta_j(\hat\tau^\vartheta)),\\[6pt]
\lambda^\vartheta_M:={}&\bigl(1-G_i^\vartheta(\hat\tau^\vartheta-)\bigr)\bigl(1-G_j^\vartheta(\hat\tau^\vartheta-)\bigr)\mu_M(\alpha^\vartheta_1(\hat\tau^\vartheta),\alpha^\vartheta_2(\hat\tau^\vartheta)).
\end{align*}
If $\hat\tau^\vartheta=\hat\tau^\vartheta_1=\hat\tau^\vartheta_2$ and $\alpha^\vartheta_1(\hat\tau^\vartheta)+\alpha^\vartheta_2(\hat\tau^\vartheta)=0$, then
\begin{align*}
\lambda^\vartheta_{L,i}:={}&\bigl(1-G_i^\vartheta(\hat\tau^\vartheta-)\bigr)\bigl(1-G_j^\vartheta(\hat\tau^\vartheta-)\bigr)\,\frac{1}{2}\!\begin{aligned}[t]
\biggl\{&\liminf_{\underset{\alpha^\vartheta_i(t)+\alpha^\vartheta_j(t)>0}{t\searrow\hat\tau^\vartheta}}\mu_L(\alpha^\vartheta_i(t),\alpha^\vartheta_j(t))\\
+&\limsup_{\underset{\alpha^\vartheta_i(t)+\alpha^\vartheta_j(t)>0}{t\searrow\hat\tau^\vartheta}}\mu_L(\alpha^\vartheta_i(t),\alpha^\vartheta_j(t))\biggr\},
\end{aligned}\\
\lambda^\vartheta_M:={}&0.
\end{align*}
\end{definition}

\begin{remark}\label{rem:outcome}
\noindent
\begin{enumerate}
\item
$\lambda^\vartheta_M$ is the probability of simultaneous stopping at $\hat\tau^\vartheta$, while $\lambda^\vartheta_{L,i}$ is the probability of firm $i$ becoming the leader, i.e., that of firm $j$ becoming follower. It holds that $\lambda^\vartheta_M+\lambda^\vartheta_{L,i}+\lambda^\vartheta_{L,j}=\bigl(1-G_i^\vartheta(\hat\tau^\vartheta-)\bigr)\bigl(1-G_j^\vartheta(\hat\tau^\vartheta-)\bigr)$. Dividing by $\bigl(1-G_i^\vartheta(\hat\tau^\vartheta-)\bigr)\bigl(1-G_j^\vartheta(\hat\tau^\vartheta-)\bigr)$ where feasible yields the corresponding conditional probabilities.

\item
If any $\alpha^\vartheta_\cdot=1$, then no limit argument is needed. Otherwise both $\alpha^\vartheta_\cdot$ are right-continuous and the corresponding limit of $\mu_M$ exists. $\mu_L$, however, has no continuous extension at the origin, whence we use the symmetric combination of $\liminf$ and $\limsup$, ensuring consistency whenever the limit does exist. If the limit in a potential equilibrium does not exist, both firms will be indifferent about the roles; see Lemma A.5 in \cite{RiedelSteg14}.
\end{enumerate}
\end{remark}

%\bibliography{jhs}

\bibliography{BibCoffee}
\end{document}